\newtheorem{theorem}{Theorem}[section]
\newtheorem{condition}{Condition}[section]
\newtheorem{assumptions}[theorem]{Assumption}
\newtheorem{proposition}[theorem]{Proposition}
\newtheorem{lemma}[theorem]{Lemma}
\newtheorem{corollary}[theorem]{Corollary}
\newtheorem{remark}[theorem]{Remark}
\theoremstyle{definition}
\newtheorem{example}[theorem]{Example}
\def\bz {{\bf z}}
\def\bP{{\mathbbm{P}}}
\newcommand{\real}{\mathbb{R}}
\date{}
\title{A conformal test of linear models via permutation-augmented regressions}
\author{Leying  Guan\thanks{Dept. of  Biostatistics,
    Yale University, leying.guan@yale.edu}
    }
\begin{document}
\maketitle
\begin{abstract}
Permutation tests are widely recognized as robust alternatives to tests based on normal theory. Random permutation tests have been frequently employed to assess the significance of variables in linear models. Despite their widespread use, existing random permutation tests lack finite-sample and assumption-free guarantees for controlling type I error in partial correlation tests. To address this ongoing challenge, we have developed a conformal test through permutation-augmented regressions, which we refer to as PALMRT. PALMRT not only achieves power competitive with conventional methods but also provides reliable control of type I errors at no more than $2\alpha$, given any targeted level $\alpha$, for arbitrary fixed designs and error distributions. We have confirmed this through extensive simulations.

Compared to the cyclic permutation test (CPT) and residual permutation test (RPT), which also offer theoretical guarantees, PALMRT does not compromise as much on power or set stringent requirements on the sample size, making it suitable for diverse biomedical applications. We further illustrate the differences in a long-Covid study where PALMRT validated key findings previously identified using the t-test after multiple corrections, while both CPT and RPT suffered from a drastic loss of power and failed to identify any discoveries. We endorse PALMRT as a robust and practical hypothesis test in scientific research for its superior error control, power preservation, and simplicity. An R package for PALMRT is available at \url{https://github.com/LeyingGuan/PairedRegression}.
\end{abstract}

Key words: Partial correlation; Random permutation; Assumption-free; Conformal test.
\section{Introduction}

Consider a linear regression model
\begin{equation}
\label{eq:model1}
y_i = x_i\beta+\bz_i^\top\theta+\epsilon_i, i = 1,\ldots,n
\end{equation}
with features $x_i\in \real$ and $\bz_i\in \real^p$, and random errors $\varepsilon_i$  independent of $(x_i, \bz_i)$.  Testing whether the coefficient \(\beta\) is zero in this linear model, or equivalently, examining the partial correlation between \(x\) and \(y\), remains a fundamental statistical query and a prevalent approach in applications like biological signature discovery.  For instance, a key question in the recent MY-LC study is whether long-COVID (LC) is associated with specific cell type proportions, after adjusting for age, sex, and Body Mass Index (BMI)  \citep{klein2022distinguishing} :
\begin{equation}
\label{eq:MYLC}
\mbox{cell type frequency}\sim \mbox{intercept}+\mbox{LC}+\mbox{age}+\mbox{sex}+\mbox{BMI}+\mbox{age}\times\mbox{BMI} +\mbox{sex}\times \mbox{BMI}.
\end{equation}
While the F/t-test is a standard approach \citep{fisher1922goodness,fisher1924036, fisher1970statistical}, it may yield anti-conservative p-values under ill-behaved error distributions or limited sample sizes, compromising the reliability of scientific discoveries. Therefore, there is a need for a valid hypothesis test that minimizes assumptions on the noise distribution and does not rely on asymptotic theory. Motivated by this, we seek a robust and straightforward testing procedure for partial correlation under the sole assumption of exchangeability.

\begin{assumptions}
\label{ass:exchangeble_noise}
The noise variables $\varepsilon_1,\ldots, \varepsilon_n$ are exchangeable with in law.
\end{assumptions}

The problem of testing partial correlations has been intensively studied, with a focus on developing various random permutation methods to increase robustness against diverse noise distributions.  \citet{draper1966testing} introduced a permutation technique for partial correlation testing, referred to as PERMtest, which involves permuting $x$ alone and disrupts the relationship between $x$ and $Z$. Subsequent methods better accounted for this relationship. A family of methods, such as the Freedman and Lane test (FLtest) \citep{freedman1983nonstochastic} and the Kennedy-test \citep{kennedy1995randomization}, permutes response residuals obtained from a reduced model that regresses $y$ on $Z$. Another type of method, such as the  Braak-test \citep{ter1992permutation}, shuffles residuals from a full model regressing $y$ on both $x$ and $Z$. These methods have demonstrated empirically robust type I error control in various benchmark studies when using pivotal test statistics like t-test or F-test statistics  \citep{hall1991two,westfall1993resampling, anderson2001permutation, winkler2014permutation}. However, finite-sample and assumption-free theoretical guarantees for these random permutation tests remain elusive. This contrasts with permutation tests for simple correlation, which are theoretically sound in terms of type I error control \citep{pitman1937significance,manly2006randomization, edgington2007randomization}. Indeed, even the widely-examined and  recommended FLtest can, under specific conditions, yield drastically inflated type I error rates, as we demonstrate later.

Recently, \citet{lei2021assumption} introduced the cyclic permutation test (CPT), which offers worst-case guarantees for controlling type I error. Given a sample size \( n \) and a total feature dimension \( p \) for \( z \), CPT theoretically ensures type I error control at a target level \( \alpha \), provided \( n > (\frac{1}{\alpha} - 1)p \). This condition is often challenging to meet in biomedical studies. For instance, in immunology research, sample sizes frequently hover around 100 or a few hundreds, while investigating simultaneously hundreds or even more different biomarkers for their partial correlations with a primary feature of interest. Although applying CPT with a \( \alpha = 0.05 \) cutoff may suffice with fewer concomitant covariates, the situation complicates when multiple hypothesis corrections are applied, as smaller nominal \( p \)-value thresholds are needed to maintain a reasonable false discovery rate (FDR).  In an independent pursuit of robust partial correlation analysis with increased power and reduced sample size, \citet{wen2022residual} introduces the residual permutation test (RPT), defined under the condition $n > 2p$. This test utilizes a sequence of specially-designed permutation matrices, which, in conjunction with the identity matrix, form a group. However, there's a trade-off in the choice of the size parameter $K$: a small permutation set size limits testing for small p-values, while a large $K$ reduces RPT power significantly by design (see Table \ref{tab:summary}). Obtaining a sequence of well-designed permutation matrices with a reasonable size, such as $(\lceil \frac{1}{\alpha}\rceil-1)$ as suggested by \citet{wen2022residual}, remains challenging for small $\alpha$. Following the proposed RPT Algorithm \cite{wen2022residual}, a non-trivial test  requires $\alpha > \frac{1}{n}$.

In this manuscript, we develop PALMRT, which stands for Permutation-Augmented Linear Model Regression Test, as a conformal test to examine the partial correlation relationships in a linear model. Unlike traditional random permutation methods, PALMRT not only empirically controls type I error but also theoretically guarantees a worst-case coverage of \(2\alpha\) at any targeted error level \(\alpha\). It offers well-calibrated \(p\)-values for finite-sample type I error control under arbitrary fixed designs or noise distributions.  Table~\ref{fig:comparison_summary} provides a brief summary of different methods regarding their construction, empirical performance,  theoretical guarantee, and dimension constraints for non-trivial rejection under the i.i.d random Gaussian design.

In our empirical analyses, PALMRT not only maintains empirical Type I error rates below the designated thresholds across diverse simulation settings but also exhibits comparable power to established methods like FLtest when \( (n\slash p) \) is moderately large. It significantly outperforms CPT  and RPT in commonly encountered scenarios. Upon re-analyzing the MY-LC study, PALMRT validates the top findings from the original study, while CPT and RPT show reduced discoveries and fail to confirm any main findings after multiple test corrections. Consequently, we recommend adopting PALMRT as the default robust procedure for detecting significant partial correlations in our daily research.

In our empirical analyses, PALMRT not only maintains empirical Type I error rates below the designated thresholds across diverse simulation settings but also exhibits comparable power to established methods like FLtest when \( (n\slash p) \) is moderately large and outperforms CPT in commonly encountered scenarios. Upon re-analyzing the MY-LC study, we confirmed the robustness of the top findings from the original study using PALMRT.  Consequently, we advocate for the adoption of PALMRT as a robust default procedure for detecting significant partial correlations in our daily  research.

 \begin{table}
 \caption{Summary of representative existing permutation methods and PALMRT. Let \( H^z \) and \( H^{xz} \) represent the projection matrices onto the column spaces of the concomitant features \( Z \) and all features \( (x, Z) \), respectively. Given a permutation \( \pi \) of \( (1, \ldots, n) \), \( x_{\pi} \) and \( Z_{\pi} \) denote the permuted versions of \( x \) and \( Z \). The intercept in the linear model is omitted for brevity. The columns labeled ``Original Model" and ``Permuted Model" detail the models used to generate the original and permuted test statistics, respectively. For CPT, \( \eta \) is a length-\( n \) vector used to define valid cyclic permutations and generate $m$ cyclic permutation copies. Define $L = \lfloor \frac{n}{m}\rfloor$ as the cyclic step size. CPT requires $\eta^\top Z_{(kL+1):(kL+n)}$  to be a constant vector for all $0\leq k\leq m$, where a  index greater than $n$ is looped back from the begging, e.g., $x_{kL+n}=x_{kL}$. Let $V\in \real^{n\times (n-p)}$ be an orthonormal matrix orthogonal to the column space of $Z$.  RPT generates $m$ permutations $\pi_1, \ldots, \pi_m$ based on a variant of cyclic permutation to satisfy the group requirement, resulting in $V_{\pi_k}$ as the row-permuted version of $V$ for $k=1, \ldots, m$. RPT then constructs $\tilde V_k \in \real^{n\times (n-2p)}$ as the orthonormal matrix belonging to the intersection of the column space of $V$ and $V_{\pi_k}$.  The ``Empirical" column indicates empirical type I error control from previous studies (statement on PALMRT is based on our study). The column ``Worst-case"  indicates the finite-sample and assumption-free theoretical guarantee  for a given level \( \alpha \), and ``$\gg \alpha$" represents that no guarantee has been established.  The column ``Restriction" specifies the minimum sample size when a test is defined and can be non-trivial under an i.i.d Gaussian design. Except CPT and RPT, F-statistics are utilized for comparison. }
  \label{tab:summary}
  \label{fig:comparison_summary}
 \begin{center}
 \begin{adjustbox}{width = \textwidth}
  \begin{tabular}{|l|l|l|l|l|l|}
   \hline
 Method & Original Model & Permuted Model & Empirical & Worst-Case & Restriction\\\hline
 PERMtest&$y\sim x+Z$&$y\sim x_{\pi}+Z$& $\lesssim\alpha$& $\gg\alpha$& $n>p$\\\hline
 FLtest&$y\sim x+Z$&$[(I-H^z)y]_{\pi}\sim x+Z$&$\lesssim\alpha$& $\gg\alpha$& $n>p$\\\hline
Kenny-test&$y\sim x+Z$&$[(I-H^z)y]_{\pi}\sim (I-H^z)x$& $\lesssim\alpha$& $\gg\alpha$& $n>p$\\\hline
Braak-test&$y\sim x+Z$&$H^{xz}y+[(I-H^{xz})y]_{\pi}\sim x+Z$& $\lesssim\alpha$& $\gg\alpha$& $n>p+1$\\\hline
  CPT&$\eta^\top y$& $\eta^\top y_{(kL+1):(kL+n)}$&$\approx\alpha$&$\leq \alpha$&$\frac{n}{p}> \frac{1}{\alpha}-1$\\\hline
RPT&$\min\limits_{\tilde V\in \{\tilde V_1,\ldots, \tilde V_m\}}|x^\top \tilde V \tilde V^\top y|$& $|x^\top \tilde V_k \tilde V^\top_k y_{\pi_k}|$&$\lesssim\alpha$&$\leq \alpha$&$n >\max( 2p, \frac{1}{\alpha})$\\\hline 
PALMRT&$y\sim x+Z+Z_{\pi}$&$y\sim x_{\pi}+Z_{\pi}+Z$& $\lesssim\alpha$& $\leq 2\alpha$& $n>2p$\\\hline
 \end{tabular}
  \end{adjustbox}
 \end{center}
 \end{table}

\section{Conformal test via permutation-augmented regressions}
\subsection{Construction of PALMRT}
Let \( x\in \real^n \) be the target feature, and \( Z \) the observation matrix with rows \( \bz_1, \ldots, \bz_n \). Define \( [n] \) as the vector \( (1, \ldots, n) \) and \( \pi \) as a permutation of \( [n] \). Denote \( x_{\pi} \) and \( Z_{\pi} \) as the row-permuted versions of \( x \) and \( Z \) respectively. PALMRT is a random permutation method for testing partial correlation. For each randomly generated permutation $\pi$ of $[n]$, it constructs ``original" and  ``permuted test" statistics based on a pair of permutation-augmented regressions:
\begin{itemize}
\item Original test statistic $T_{original}$:  the F-statistic for significance of $x$ in  the model $y\sim x+Z+Z_{\pi}$:
\[
T_{original} = \left(\|(I-H^{zz_{\pi}})y\|_2^2-\|(I-H^{xzz_{\pi}})y\|_2^2\right)\slash \|(I-H^{xzz_{\pi}})y\|_2^2.
\]

\item Permuted test statistic $T_{perm}$: the F-statistic  for significance of $x_{\pi}$ in the model $y\sim x_{\pi}+Z+Z_{\pi}$:
\[
T_{perm} = \left(\|(I-H^{zz_{\pi}})y\|_2^2-\|(I-H^{x_{\pi}zz_{\pi}})y\|_2^2\right)\slash \|(I-H^{x_{\pi}zz_{\pi}})y\|_2^2.
\]
\end{itemize}
Here, \( H^{*} \) denotes the projection matrix onto the column space of its argument \( * \). For instance, \( H^{z_{\pi}z} \) represents the projection matrix onto the column space of \( (Z_{\pi}, Z) \),  \( H^{xzz_{\pi}} \) onto that of \( (x, Z, Z_{\pi}) \), etc.  We have more confidence of having non-zero $\beta$ if $T_{original}$ is larger than $T_{perm}$. One can easily verify that comparing  \(T_{\text{original}} \) and \( T_{\text{perm}} \) is equivalent to comparing the following simplified expressions on the fitted residuals:
\begin{equation}
\label{eq:PREGstat0}
T_{original}=\|(I-H^{x_{\pi}zz_{\pi}})y\|_2^2,\quad T_{perm} =\|(I-H^{xzz_{\pi}})y\|_2^2.
\end{equation}
We adopt eq.~\eqref{eq:PREGstat0} to construct the original and permuted statistics for any given permutation. Subsequently, we generate \( B \) random permutations \(\{ \pi_b \}^{B}_{b=1}\) of \( [n] \) uniformly and compare the associated original and permuted statistics to compute the p-value for testing \( H_0: \beta = 0 \). The complete procedure is outlined in Algorithm~\ref{alg:test}. 
\begin{algorithm}
\caption{A conformal test for partial correlation via paired regression}\label{alg:test}
\begin{algorithmic}[1]
\Procedure{PALMRT}{$y, x, Z, B$}\quad\;\;\;\Comment{Test for partial correlation $y\sim x|Z$ using $B$ permutations}
\For{$b=1,\ldots, B$}
\State Generate a random permutation $\pi_b$.
\State Construct a pair of statistics as in eq.\;(\ref{eq:PREGstat0}) with $\pi\leftarrow\pi_b$:
\[
(T_{0b},T_{b0})\leftarrow(T_{original}, T_{perm})
\]
\State Set $\omega_{b}=\frac{1}{2}\mathbbm{1}\{T_{b0}=T_{0b}\}+ \mathbbm{1}\{T_{b0}>T_{0b}\}$.
\EndFor
\State Construct p-value for $H_0: \beta = 0$ as $p_{val} \leftarrow \frac{1+\sum_{b=1}^B \omega_{b}}{B+1}$.
\State \textbf{return} $p_{val} $
\EndProcedure
\end{algorithmic}
\end{algorithm}

\subsection{An example differentiating FLtest and PALMRT}
We offer an illustrative example, see Example~\ref{example1}, to demonstrate PALMRT's superior robustness compared to FLtest. Although FLtest has been lauded for its type I error control in prior studies, it may fail under extreme noise and design configurations. In contrast, PALMRT consistently maintains type I error control.
\begin{example}
\label{example1}

We set $n = 100$ and $p=1$, and examine a special design  where $x=(1,0,\ldots, 0)^\top$, $z = (0,1,0,\ldots,0)^\top$.  We generate the response $y$ under the global null as $y\sim \varepsilon$, where $\varepsilon \sim N(0, I_{n\times n}) + 10^4 \times \text{Multinomial}(1;\frac{1}{n}, \ldots, \frac{1}{n}) \times (-1)^{\text{Bernoulli}(\frac{1}{2})}$, denoted as ``multinomial noise."  This extreme noise scenario serves as a robustness test. The feature design represents an imbalanced ANOVA setup with many control samples but only one sample per treatment group. Even though exact permutation is feasible by shuffling the zero rows, we employ standard FLtest and PALMRT to evaluate their empirical coverage using 2000 random draws of $y$. Figure~\ref{fig:counter_example}A shows the miscoverage ratio, defined as 
\[
\text{miscoverage ratio} = \frac{\text{Empirical miscoverage}}{\text{Targeted miscoverage}}.
\]
As the targeted level $\alpha$ becomes small, FLtest encountered excessively inflated type I error -- more than 10-fold that of the targeted level when $\alpha = 0.001$. In contrast, PALMRT controls type I error for small $\alpha$. Figures~\ref{fig:counter_example}B-C  are the histograms of p-values using FLtest and PALMRT.  The distribution of p-values from FLtest contains a spike close to 0, leading to its type I error inflation.
\begin{figure}
\includegraphics[width = 1\textwidth]{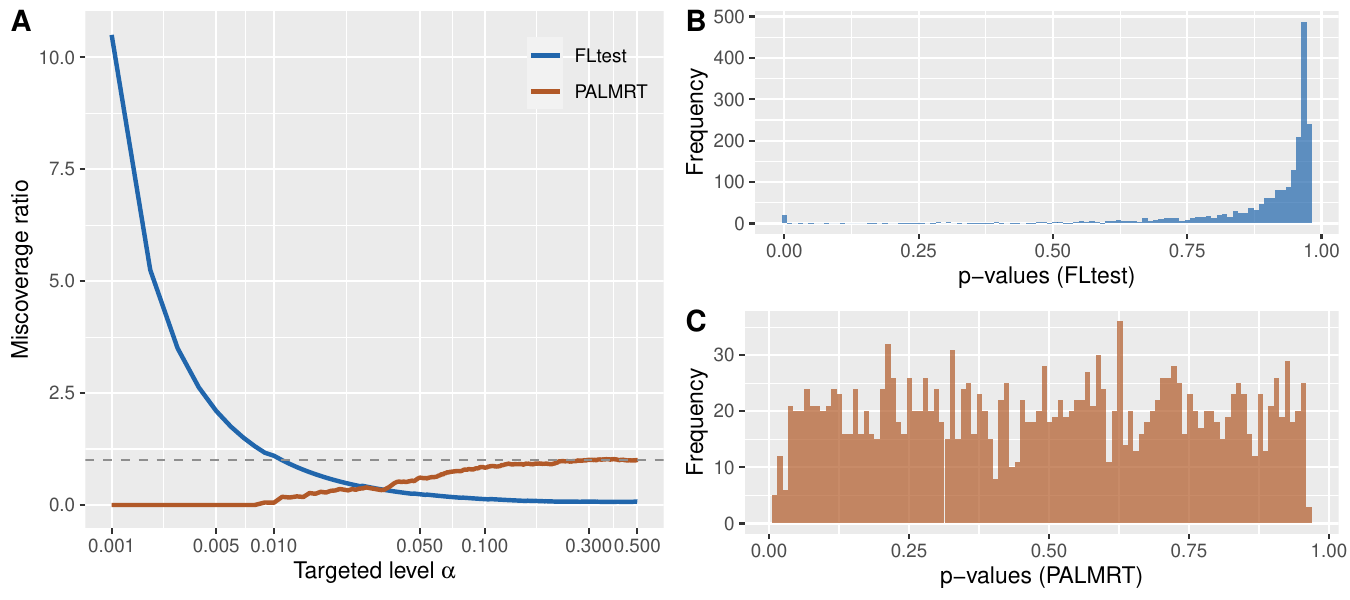}
\caption{Comparison between FLtest and PALMRT on a distinguishing example. Panel A shows the miscoverage ratios using FLtest and PALMRT in Example \ref{example1} as we range $\alpha$ from 0.001 to 0.5. Panels B-C display the histograms of p-values using the two tests.}
\label{fig:counter_example}
\end{figure}
\end{example}
Example \ref{fig:counter_example} is a distinguishing example where FLtest suffers from being severely anti-conservative for small  $\alpha$ but PALMRT continues to offer robust type I error control. This robustness of PALMRT is universally true. In the next section, we will show that PALMRT, as well as a family of other permutation tests based on paired constructions, guarantees a maximum type I error of \(2\alpha\), irrespective of the noise distribution, design, and sample size.
\section{Type I error control guarantee}
In this section, we establish that PALMRT provides finite-sample type I error guarantees under any design and exchangeable noise. This property is generalized to a family of tests that performs comparisons with  paired statistics as realizations of a specially-designed bi-variate function on permutation orders.

Under the null hypothesis $H_0$, PALMRT essentially compares a bi-variate function across two given permutations and their swaps. Specifically, for any two permutations \(\pi_1\) and \(\pi_2\) of \([n]\), we define a bi-variate function, which also incorporates data \(x\), \(Z\), and unobserved noise \(\varepsilon\) as model parameters, as follows:
\begin{equation}
\label{eq:PREGstats}
T^{PALMRT}(\pi_1, \pi_2; x, Z, \varepsilon)=\|(I-H^{x_{\pi_2}z_{\pi_2}z_{\pi_1}})\varepsilon\|_2^2.
\end{equation}
Let $\pi_0$ be the identity permutation of $[n]$.
\begin{proposition}
\label{prop:function_PREG}
Under the null hypothesis \(H_0\), the statistic pair \((T_{0b},T_{b0})\) are realizations of \(T^{PALMRT}(.,.;x, Z, \varepsilon)\) evaluated at \((\pi_0, \pi_b)\) and \((\pi_b, \pi_0)\)  respectively:
\[
(T_{0b}, T_{b0}) = (T^{PALMRT}(\pi_0, \pi_b), T^{PALMRT}(\pi_b, \pi_0)).
\]
\end{proposition}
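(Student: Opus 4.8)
The plan is to substitute the null model into the residual-norm definitions of $T_{0b}$ and $T_{b0}$ used in Algorithm~\ref{alg:test} and observe that the deterministic mean component is annihilated by both residual projections, because each augmented design matrix contains the columns of $Z$. Concretely, recall from \eqref{eq:PREGstat0} that for the $b$-th permutation $\pi_b$ we have $T_{0b} = \|(I - H^{x_{\pi_b} z z_{\pi_b}}) y\|_2^2$ and $T_{b0} = \|(I - H^{x z z_{\pi_b}}) y\|_2^2$, while under $H_0$ model~\eqref{eq:model1} reads $y = Z\theta + \varepsilon$.

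The first step is to note that $\mathrm{col}(Z) \subseteq \mathrm{col}(x_{\pi_b}, Z, Z_{\pi_b})$ and $\mathrm{col}(Z) \subseteq \mathrm{col}(x, Z, Z_{\pi_b})$, so both projections fix every column of $Z$, which gives $(I - H^{x_{\pi_b} z z_{\pi_b}})(Z\theta) = 0$ and $(I - H^{x z z_{\pi_b}})(Z\theta) = 0$. Plugging $y = Z\theta + \varepsilon$ into the two norms and using linearity of $I - H^{\cdot}$ removes the $Z\theta$ term, leaving $T_{0b} = \|(I - H^{x_{\pi_b} z z_{\pi_b}})\varepsilon\|_2^2$ and $T_{b0} = \|(I - H^{x z z_{\pi_b}})\varepsilon\|_2^2$. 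The second step is to match these with the bivariate function in \eqref{eq:PREGstats}: since $\pi_0$ is the identity permutation, $Z_{\pi_0} = Z$, so $T^{PALMRT}(\pi_0, \pi_b) = \|(I - H^{x_{\pi_b} z_{\pi_b} z})\varepsilon\|_2^2$ and $T^{PALMRT}(\pi_b, \pi_0) = \|(I - H^{x z z_{\pi_b}})\varepsilon\|_2^2$. It only remains to observe that reordering the blocks listed in a projection superscript, e.g.\ $(x_{\pi_b}, Z_{\pi_b}, Z)$ versus $(x_{\pi_b}, Z, Z_{\pi_b})$, leaves the column space, hence the projection, unchanged; this identifies the two pairs and proves the proposition.

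I do not expect a genuine obstacle here: the argument is a short computation. The one point worth stating carefully is the invariance observation that under $H_0$ the regression mean $Z\theta$ lies in each of the augmented column spaces considered, whether $x$, $Z$, or both are row-permuted, so it is always projected away and only the noise survives. This is exactly what makes $(T_{0b}, T_{b0})$ a deterministic function of $\varepsilon$ alone, evaluated at the ordered pairs $(\pi_0, \pi_b)$ and $(\pi_b, \pi_0)$, and it is the structural property that the subsequent exchangeability argument underlying the $2\alpha$ type I error guarantee will exploit.
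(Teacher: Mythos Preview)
Your proof is correct and follows exactly the approach the paper intends: a direct computation showing that under $H_0$ the mean $Z\theta$ is annihilated by both augmented residual projections, leaving only the noise terms, which then match \eqref{eq:PREGstats} up to an immaterial reordering of the columns defining the projection. The paper in fact omits an explicit proof of this proposition, treating it as immediate from the definitions, so your write-up supplies precisely the missing details.
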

Many existing random permutation tests, including FLtest and PERMtest, can be expressed as realizations of such bi-variate functions. For example, construct bi-variate functions $T^{PERM}(.)$ and $T^{FL}(.)$ defined below,
\begin{align*}
&T^{PERM}(\pi_1, \pi_2; x, Z, \varepsilon)=\frac{\|(I-H^{z})\varepsilon\|_2^2-\|(I-H^{x_{\pi_1}z})\varepsilon\|_2^2}{\|(I-H^{x_{\pi_1}z})\varepsilon\|_2^2\slash (n-p-2)},\\
&T^{FL}(\pi_1, \pi_2; x, Z, \varepsilon)=\frac{\|(I-H^{z_{\pi_1}})(I-H^{z})\varepsilon\|_2^2-\|(I-H^{x_{\pi_1}z_{\pi_1}}(I-H^{z})\varepsilon\|_2^2}{\|(I-H^{x_{\pi_1}z_{\pi_1}})(I-H^{z})\varepsilon\|_2^2\slash (n-p-2)}.
\end{align*}
Let \(T_0\) denote the original statistic and \(\{T_b\}_{b=1}^B\) represent \(B\) permuted test statistics. Under \(H_0\), it can be verified via direct calculation that  \( (T_{0}, T_{b}) \) are realizations at \( (\pi_0, \pi_b) \) and \( (\pi_b, \pi_0) \) of \( T^{PERM}(.,.;x, Z,\varepsilon) \) or \( T^{FL}(.,.;x, Z,\varepsilon) \) for PERMtest and FLtest respectively, with the second argument in the bivariate functions being inactive.

What sets \( T^{PALMRT}(...) \) apart and enables its theoretical guarantee? The crucial distinction between \( T^{PALMRT}(...) \) and \( T^{FL}(...) \) or \( T^{PERM} (...)\) lies in the transferability of permutations from the noise parameter \( \varepsilon \) to its permutation arguments. This property holds uniquely for \( T^{PALMRT}(...) \) across all noise realizations and designs. Let \( \sigma \) be an arbitrary permutation of \( [n] \) and \( \sigma^{-1} \) its inverse, such that \( \sigma \circ \sigma^{-1} = \sigma^{-1} \circ \sigma = \pi_0\) with \( \circ \) denoting composition. Then, any permutation of the parameters \( \varepsilon \) in  \( T^{PALMRT}(...) \) can be expressed equivalently as applying the inverse permutation $\sigma^{-1}$ to the permutation arguments \( \pi_1 \) and \( \pi_2 \).
\begin{proposition}
\label{prop:transfer}
The application of a permutation $\sigma$ to $\varepsilon$ is equivalent to applying the permutation $\sigma^{-1}$ to $\pi_1$, $\pi_2$ in $T^{PALMRT}$:
\[
T^{PALMRT}(\pi_1, \pi_2; x, Z, \varepsilon_{\sigma}) = T^{PALMRT}(\pi_1\circ \sigma^{-1}, \pi_2\circ \sigma^{-1}; x, Z, \varepsilon).
\]
\end{proposition}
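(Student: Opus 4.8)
The plan is to reduce the identity to two elementary facts: row-permuting a matrix is left-multiplication by an orthogonal permutation matrix, and conjugation by an orthogonal matrix commutes with orthogonal projection.

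First I would set up notation. For a permutation $\pi$ of $[n]$, let $P_\pi$ be the $n\times n$ permutation matrix for which $x_\pi = P_\pi x$ and $Z_\pi = P_\pi Z$ (so also $\varepsilon_\sigma = P_\sigma\varepsilon$). A one-line index check gives the composition rule $P_{\pi\circ\rho} = P_\rho P_\pi$, and each $P_\pi$ is orthogonal with $P_\pi^\top = P_\pi^{-1} = P_{\pi^{-1}}$. Consequently, for every permutation $\pi$ we have $x_{\pi\circ\sigma^{-1}} = P_{\sigma^{-1}}\,x_\pi$ and $Z_{\pi\circ\sigma^{-1}} = P_{\sigma^{-1}}\,Z_\pi$.

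The crux is the next step, which is also where the special structure of $T^{PALMRT}$ enters. Write $M = [\,x_{\pi_2}\mid Z_{\pi_2}\mid Z_{\pi_1}\,]$ for the augmented design matrix in $T^{PALMRT}(\pi_1,\pi_2;x,Z,\cdot)$, so that $T^{PALMRT}(\pi_1,\pi_2;x,Z,\varepsilon) = \|(I-H^M)\varepsilon\|_2^2$. Applying the displayed identities block by block, the augmented design matrix appearing in $T^{PALMRT}(\pi_1\circ\sigma^{-1},\pi_2\circ\sigma^{-1};x,Z,\cdot)$ is precisely $P_{\sigma^{-1}}M$: each of its three column blocks carries a permuted copy of the data, so they all pick up the \emph{same} extra factor $P_{\sigma^{-1}}$. (This is exactly what fails for $T^{FL}$ and $T^{PERM}$, where a residualizing factor $I - H^z$ involves the unpermuted $Z$ and cannot be absorbed into a single permutation matrix.)

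Finally I would combine this with the standard identity $H^{P_{\sigma^{-1}}M} = P_{\sigma^{-1}}H^M P_{\sigma^{-1}}^\top = P_{\sigma^{-1}}H^MP_\sigma$ (since $P_{\sigma^{-1}}H^MP_\sigma$ is symmetric, idempotent, and has range $P_{\sigma^{-1}}\,\mathrm{col}(M) = \mathrm{col}(P_{\sigma^{-1}}M)$), so that $I - H^{P_{\sigma^{-1}}M} = P_{\sigma^{-1}}(I-H^M)P_\sigma$. Using that $P_{\sigma^{-1}}$ is orthogonal (hence norm-preserving) and that $P_\sigma\varepsilon = \varepsilon_\sigma$,
\[
T^{PALMRT}(\pi_1\circ\sigma^{-1},\pi_2\circ\sigma^{-1};x,Z,\varepsilon) = \|P_{\sigma^{-1}}(I-H^M)P_\sigma\varepsilon\|_2^2 = \|(I-H^M)\varepsilon_\sigma\|_2^2 = T^{PALMRT}(\pi_1,\pi_2;x,Z,\varepsilon_\sigma),
\]
which is the claim. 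I do not expect a genuine difficulty here; the only thing demanding care is the bookkeeping — fixing the row-permutation convention, getting $P_{\pi\circ\rho} = P_\rho P_\pi$ and the $\sigma$-versus-$\sigma^{-1}$ placement right, and checking that the three blocks $x_{\pi_2}$, $Z_{\pi_2}$, $Z_{\pi_1}$ really do all acquire the identical factor rather than interacting with one another.
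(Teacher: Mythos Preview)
Your proof is correct and is exactly the natural ``term rearrangement'' the paper alludes to; the paper itself omits the proof entirely, stating only that it follows from simple term rearrangement. Your careful bookkeeping with permutation matrices (the composition rule $P_{\pi\circ\rho}=P_\rho P_\pi$, the conjugation identity for projections, and the common factor $P_{\sigma^{-1}}$ across all three blocks) fills this in cleanly.
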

Proposition \ref{prop:transfer} is derived from simple term rearrangement, and we omit its proof here. This transferability property is pivotal for establishing the type I error guarantees. In fact, for any paired statistics \( T_{0b}, T_{b0} \) which can be considered as realizations of a bi-variate function \( T(.,.;x, Z, \varepsilon) \) at \( (\pi_0, \pi_b) \) and \( (\pi_b, \pi_0) \) under \( H_0 \), the resulting p-value from comparing $T_{0b}$ to  $T_{b0}$ offers  a theoretical guarantee as long as \( T(.,.;x, Z, \varepsilon) \)  satisfies the transferability condition \ref{cond:transfer}, as outlined in Theorem \ref{thm:main}.
\begin{condition}
\label{cond:transfer}
For any permutations $\pi_1, \pi_2, \sigma$ of $[n]$, the function $T(.,.; x, Z, \varepsilon)$ satisfies
\[
T(\pi_1, \pi_2; x, Z, \varepsilon_{\sigma}) = T(\pi_1\circ \sigma^{-1}, \pi_2\circ \sigma^{-1}; x, Z, \varepsilon).
\]
\end{condition}
\begin{theorem}
\label{thm:main}
Let \( \pi_1, \ldots, \pi_B \) be \( B \) uniformly random permutations of $[n]$, and \( T(., .; x, Z, \varepsilon) \) be a bi-variate function satisfying condition \ref{cond:transfer}. Under the null hypothesis \( H_0 \), construct paired statistics \( (T_{0b}, T_{b0}) \) as
\[
T_{0b} = T(\pi_0, \pi_b; x, Z, \varepsilon), \quad T_{b0} = T(\pi_b, \pi_0; x, Z, \varepsilon).
\]
Substituting these into Algorithm \ref{alg:test}, we obtain \( \bP_{H_0}[p_{\text{val}} \leq \alpha] < 2\alpha \) for all $\alpha > 0$, where the probability is marginalized over both noise and permutation randomness.
\end{theorem}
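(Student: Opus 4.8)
Proof proposal.

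The plan is to reduce the claim to a combinatorial fact about an exchangeable family of permutations, via a symmetrization step powered by Condition~\ref{cond:transfer}, and then close with a round-robin counting argument. Throughout, $x,Z$ are fixed and $H_0$ holds, so under the exchangeability assumption $\varepsilon$ is exchangeable in law. Note first that $p_{\text{val}}$ is a measurable function of $\bigl(\{(T_{0b},T_{b0})\}_{b=1}^B\bigr)$, hence of $(\varepsilon,\pi_1,\dots,\pi_B)$. I would introduce an auxiliary permutation $\sigma$ drawn uniformly from the symmetric group on $[n]$, independent of $\varepsilon,\pi_1,\dots,\pi_B$. For each fixed value of $\sigma$, exchangeability gives $\varepsilon_\sigma\overset{d}{=}\varepsilon$, and since $\sigma$ is independent of the $\pi_b$ we get $(\varepsilon_\sigma,\pi_1,\dots,\pi_B)\overset{d}{=}(\varepsilon,\pi_1,\dots,\pi_B)$; consequently $\bP_{H_0}[p_{\text{val}}\le\alpha]$ is unchanged if every $T_{0b},T_{b0}$ is recomputed with $\varepsilon_\sigma$ in place of $\varepsilon$.

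Next I would transfer the permutation off the noise. Applying Condition~\ref{cond:transfer} and writing $\rho_0:=\sigma^{-1}$ and $\rho_b:=\pi_b\circ\sigma^{-1}$ for $b=1,\dots,B$, the recomputed statistics become $T(\pi_0,\pi_b;x,Z,\varepsilon_\sigma)=T(\rho_0,\rho_b;x,Z,\varepsilon)$ and $T(\pi_b,\pi_0;x,Z,\varepsilon_\sigma)=T(\rho_b,\rho_0;x,Z,\varepsilon)$. A one-line density computation shows $(\rho_0,\rho_1,\dots,\rho_B)$ is uniform on $S_n^{B+1}$, hence in particular exchangeable (this is the point: transferability lets the fixed reference permutation $\pi_0$ be absorbed into an exchangeable family). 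Define the score $g(a,b):=\tfrac12\mathbbm{1}\{T(b,a;x,Z,\varepsilon)=T(a,b;x,Z,\varepsilon)\}+\mathbbm{1}\{T(b,a;x,Z,\varepsilon)>T(a,b;x,Z,\varepsilon)\}$, so that $g(a,b)+g(b,a)=1$ by trichotomy, and set $R_j:=\sum_{b\in\{0,\dots,B\}\setminus\{j\}}g(\rho_j,\rho_b)$. After symmetrization $\omega_b=g(\rho_0,\rho_b)$, so $p_{\text{val}}=\tfrac{1+R_0}{B+1}$, and exchangeability of $(\rho_0,\dots,\rho_B)$ makes $(R_0,\dots,R_B)$ exchangeable.

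Finally I would run the counting argument. View $0,1,\dots,B$ as players in a round-robin tournament in which the game between $j$ and $b$ awards $g(\rho_j,\rho_b)$ to $j$ and $g(\rho_b,\rho_j)$ to $b$, a total of one point; so $\sum_{j=0}^B R_j=\binom{B+1}{2}$ and each $R_j\in[0,B]$. Fix $t\ge0$, let $\mathcal L=\{j:R_j\le t\}$ and $N=|\mathcal L|$. The $\binom N2$ games played within $\mathcal L$ award exactly $\binom N2$ points in total to members of $\mathcal L$, and each such member's full score is at least its score within $\mathcal L$, so $\binom N2\le\sum_{j\in\mathcal L}R_j\le Nt$, which forces $N\le 2t+1$ deterministically. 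Now take $t=\alpha(B+1)-1$. If $t<0$ then $R_0\ge0$ gives $\bP_{H_0}[p_{\text{val}}\le\alpha]=0<2\alpha$; otherwise $\bE[N]=(B+1)\,\bP[R_0\le t]$ by exchangeability, whence $\bP_{H_0}[p_{\text{val}}\le\alpha]=\bP[R_0\le t]=\tfrac{\bE[N]}{B+1}\le\tfrac{2t+1}{B+1}=2\alpha-\tfrac1{B+1}<2\alpha$.

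The conceptual crux is the symmetrization: recognizing that Condition~\ref{cond:transfer} is precisely what converts the asymmetry between the identity permutation $\pi_0$ and the random $\pi_b$'s into an exchangeable collection $(\rho_0,\dots,\rho_B)$. The remaining technical obstacle is the round-robin bound $N\le 2t+1$, and this is exactly where the factor of $2$ is unavoidable: because the comparisons here are pairwise (each $T(\rho_0,\rho_b)$ against $T(\rho_b,\rho_0)$) rather than ranks against a single common order, one cannot get the $\alpha$ control of an ordinary conformal rank test, only the worst-case $2\alpha$.
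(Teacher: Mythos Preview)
Your proof is correct and follows essentially the same route as the paper: introduce an auxiliary uniform permutation $\sigma$, use Condition~\ref{cond:transfer} to transfer it onto the permutation arguments so that $(\rho_0,\dots,\rho_B)$ is i.i.d.\ uniform on $S_n$ (the paper's Proposition~\ref{prop:uniform_permutation}), and then close with the round-robin/strange-set counting bound $|\{j:R_j\le t\}|\le 2t+1$ (the paper's Proposition~\ref{prop:count_rows}). The only cosmetic difference is that the paper derives exchangeability of the row sums by further conditioning on the unordered set $\{\tilde\pi_0,\ldots,\tilde\pi_B\}$, whereas you invoke exchangeability of $(\rho_0,\dots,\rho_B)$ directly and conclude via $\bE[N]=(B+1)\,\bP[R_0\le t]$.
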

\begin{remark}
Interestingly, the empirical version of RPT, discussed independently in \cite{wen2022residual},  is also a realization of paired constructions described in Theorem \ref{thm:main} when $(Z_0, Z_{\pi^{-1}_b})$ is full rank for different permutations $\pi_b$, by setting 
\[
T(\pi_1, \pi_2;x, Z, \varepsilon)=|x_{\pi^{-1}_2}(I-H^{z_{\pi_1^{-1}}z_{\pi^{-1}_2}})\varepsilon|.
\]
This leads to the comparison between $T_{0b}=|x(I-H^{zz_{\pi^{-1}_b}})y|$ and $T_{b0}=|x_{\pi_b^{-1}}(I-H^{zz_{\pi^{-1}_b}})y|$, which can be easily verified. In \cite{wen2022residual}, the authors recognized the power issue with RPT and introduced this empirical version to enhance power. However, it lacks theoretical justification. Here, we demonstrate also that the empirical RPT has a strong theoretical foundation as a special case of Theorem \ref{thm:main} and is a version of PALMRT by replacing the pivotal statistics with residuals inner product, hence, the requirement on permutations by RPT is unnecessary in its context.
\end{remark}
The worst-case bound established in Theorem \ref{thm:main} aligns with the bounds for prediction coverage established for multisplit conformal prediction methods including CV+, Jackknife+, and ensemble conformal predictions \citep{vovk2018cross, barber2021predictive, kim2020predictive, gupta2022nested, han2023conformalized}. However, our focus diverges significantly as we concentrate on hypothesis testing for the true model parameter \( \beta \) rather than an out-of-sample prediction. Traditional exchangeability arguments, applicable when predicting on new samples, are inadequate for assessing the significance of \( \beta \). To tackle this, we employ new arguments exploiting the assumption of exchangeable noise. A proof sketch for Theorem \ref{thm:main} is provided below.
\begin{proof}[Proof sketch of Theorem \ref{thm:main}]
One key insight is that, when considering both the randomness in the permutations and \( \varepsilon \), the \( (B+1) \times (B+1) \) matrix \( T \), with its \( (l, k) \)-th entry as \( T(\pi_l, \pi_k; x, Z, \varepsilon)\), is distributionally equivalent to \( \tilde{T} \) whose \( (l, k) \)-th entry  is \( T(\tilde\pi_l, \tilde\pi_k; x, Z, \varepsilon)\) with $(\tilde\pi_l)_{l=0}^B$ independently and uniformly generated from the permutation space of $[n]$ . This equivalence allows us to analyze the p-value from Algorithm \ref{alg:test} by examining corresponding entries in \( \tilde{T} \).

Define \( f(l, \tilde{T}) = 1 + \sum_{k \neq l} \left( \mathbbm{1}\{\tilde{T}_{kl} > \tilde{T}_{lk}\} + \frac{1}{2} \mathbbm{1}\{\tilde{T}_{kl} = \tilde{T}_{lk}\} \right) \). Then, \( f(0, \tilde{T}) \) corresponds to the numerator when constructing $p_{val}$, the p-value in Algorithm \ref{alg:test}, upon substituting \( T \) with \( \tilde{T} \). Since \( \tilde{\pi}_{l} \) are i.i.d. generated, we expect \( \{f(l, \tilde{T})\}_{l=0}^B \) to be exchangeable for different $l$. Thus, we can bound \( p_{val}\leq \alpha \) by bounding the size of the index set $\{l:f(l, \tilde{T}) \leq \alpha (B+1)\}$, which can be obtained following similar arguments used in proving prediction interval's coverage for multi-split conformal prediction. 
\end{proof}
Theorem \ref{thm:main}  is the main theoretical result of this work, and we include its full proof in Section \ref{sec:proof_main}. Combining Theorem \ref{thm:main} with Proposition \ref{prop:transfer}, we concludes that Algorithm \ref{alg:test} theoretically controls type I error, albeit with a relaxed upper bound of \(2\alpha\).

Of note, unlike existing random permutation tests where the use of pivotal statistics is often crucial, Theorem \ref{thm:main} generalizes beyond pivotal statistics, allowing for other non-pivotal paired constructions. For example, \(T_{0b}, T_{b0}\) could be the absolute value of the regression coefficients of \(x\) and $x_{\pi}$ in the models \(y \sim x + Z + Z_{\pi_b}\) and  \(y \sim x_{\pi} + Z + Z_{\pi_b}\) respectively. For directional tests, the test statistics may employ either the regression coefficients (for positive effects) or their negations (for negative effects).

\section{An Exact Confidence Interval Construction}
In conjunction with the PALMRT \( p \)-value, a confidence interval for \( \beta \) can be constructed by inverting the test. Define $(T_{0b}(\beta), T_{b0}(\beta))$ as the test statistics from replacing $y$ by $(y-x\beta)$ in  Algorithm \ref{alg:CI} when constructing $(T_{0b}, T_{b0})$. Define  \( f(\beta) \) as $f(\beta) = \frac{1 + \sum_{b=1}^B \omega_b(\beta)}{B+1}$,
where $\omega_{b}(\beta) =\mathbbm{1}\{T_{0b}(\beta)< T_{b0}(\beta)\}+\frac{1}{2}\mathbbm{1}\{T_{0b}(\beta) = T_{b0}(\beta)\}$.
\begin{corollary}
\label{corollary:CI}
Set $\rm{CI}_{\alpha} =[\beta_{\min}, \beta_{max}]$ where $\beta_{\min} = \inf\{\beta:f(\beta)> \alpha\}$ and $\beta_{\max} = \sup\{\beta:f(\beta)>\alpha\}$. Then, we have $\min_{\beta} \bP [\beta\in \rm{CI}_{\alpha}]>1-2\alpha$, for all $\alpha > 0$.
\end{corollary}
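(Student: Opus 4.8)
The plan is to prove Corollary~\ref{corollary:CI} by inverting the PALMRT test, so that the coverage statement reduces directly to the type~I error bound of Theorem~\ref{thm:main}. Fix an arbitrary true value $\beta^\star$ for the coefficient of $x$, so that the data are generated as $y = x\beta^\star + Z\theta + \varepsilon$ with $\varepsilon$ exchangeable. The starting observation is that the centered response $y - x\beta^\star = Z\theta + \varepsilon$ follows a model in which the coefficient of $x$ is exactly $0$; equivalently, substituting $\beta = \beta^\star$ into the definitions of $T_{0b}(\beta)$ and $T_{b0}(\beta)$ produces precisely the PALMRT statistics one would compute under $H_0$ with response $Z\theta + \varepsilon$ and features $(x, Z)$, using the same realized permutations $\pi_1, \ldots, \pi_B$.

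The next step is to certify that $f(\beta^\star)$ is exactly the PALMRT $p$-value computed from the null data $(y - x\beta^\star, x, Z)$. Because $Z\theta$ lies in the column spaces of both $(x_{\pi_b}, Z, Z_{\pi_b})$ and $(x, Z, Z_{\pi_b})$, it is annihilated by $I - H^{x_{\pi_b}zz_{\pi_b}}$ and $I - H^{xzz_{\pi_b}}$, so that $T_{0b}(\beta^\star) = \|(I - H^{x_{\pi_b}zz_{\pi_b}})\varepsilon\|_2^2$ and $T_{b0}(\beta^\star) = \|(I - H^{xzz_{\pi_b}})\varepsilon\|_2^2$. By Proposition~\ref{prop:function_PREG} these equal $T^{PALMRT}(\pi_0, \pi_b; x, Z, \varepsilon)$ and $T^{PALMRT}(\pi_b, \pi_0; x, Z, \varepsilon)$, and the weights $\omega_b(\beta^\star)$ coincide with the $\omega_b$ of Algorithm~\ref{alg:test} (the inequality direction and the tie-splitting term match). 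Since $T^{PALMRT}$ satisfies the transferability Condition~\ref{cond:transfer} by Proposition~\ref{prop:transfer}, Theorem~\ref{thm:main} applies with this bi-variate function and yields $\bP[f(\beta^\star) \leq \alpha] < 2\alpha$, the probability being taken over $\varepsilon$ and $\pi_1, \ldots, \pi_B$ jointly.

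Finally I would invoke the geometry of test inversion. On the event $\{f(\beta^\star) > \alpha\}$ the point $\beta^\star$ belongs to the set $S = \{\beta : f(\beta) > \alpha\}$, and any element of a nonempty set lies between its infimum and supremum, so $\beta_{\min} \leq \beta^\star \leq \beta_{\max}$, i.e. $\beta^\star \in \mathrm{CI}_\alpha$; if $S$ is empty this event is impossible, so the inclusion $\{f(\beta^\star) > \alpha\} \subseteq \{\beta^\star \in \mathrm{CI}_\alpha\}$ holds unconditionally. Hence $\bP[\beta^\star \in \mathrm{CI}_\alpha] \geq \bP[f(\beta^\star) > \alpha] > 1 - 2\alpha$, and since $\beta^\star$ was arbitrary, $\min_\beta \bP[\beta \in \mathrm{CI}_\alpha] > 1 - 2\alpha$. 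Note that the argument never uses that $S$ is an interval; it only uses that $\mathrm{CI}_\alpha$ is the convex hull of $S$. The one place that requires care — rather than a genuine obstacle — is the reduction in the first two steps: one must verify that centering by $x\beta^\star$ lands exactly in the hypothesis setup of Theorem~\ref{thm:main} (same permutations reused across all $\beta$, the nuisance term $Z\theta$ projected away, and the tie convention in $\omega_b(\beta)$ matching Algorithm~\ref{alg:test}), after which the conclusion is immediate.
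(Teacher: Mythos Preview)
Your proposal is correct and follows exactly the approach the paper intends: the corollary is stated immediately after introducing $f(\beta)$ as the PALMRT $p$-value for the centered response $y - x\beta$, and the paper treats it as a direct consequence of inverting the test together with Theorem~\ref{thm:main}. Your explicit verification that $Z\theta$ is annihilated by both projections (so that $T_{0b}(\beta^\star)$ and $T_{b0}(\beta^\star)$ reduce to the $T^{PALMRT}$ evaluations on $\varepsilon$), and that the tie convention in $\omega_b(\beta)$ matches Algorithm~\ref{alg:test}, fills in precisely the details the paper leaves implicit.
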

\begin{remark}
The set obtained through directly inverting \( \{ \beta : f(\beta) > \alpha \} \), is often an interval, but not always guaranteed to be so. By taking the infimum and supremum of this set, we obtain a confidence interval \( CI_{\alpha} \) with worst-case guarantee at least as strong as direct inversion.
\end{remark}
The pertinent question remaining is the efficient computation of the confidence interval \(CI_{\alpha}\) as delineated in Corollary~\ref{corollary:CI}. Traditional methods for constructing the confidence interval in permutation tests typically rely on normal theory, Bootstrap~\citep{efron1987better, efron1994introduction, diciccio1996bootstrap, davison1997bootstrap}, or grid search~\citep{garthwaite1996confidence}. Here, we provide an exact formulation of \(CI_{\alpha}\) via examining critical values, derived from pair-wise comparisons at each permutation $\pi_b$. First, we observe that the contribution from the \(b^{th}\) term to the numerator is explicitly known.
\begin{lemma}
\label{lem:CI_term}
Set \(c_{b1} = \|(I-H^{x_{\pi_b}z_{\pi_b}z})x\|_2^2\), \(c_{b2} = x^\top (I-H^{x_{\pi_b}z_{\pi_b}z})y\), \(c_{b3} = \|(I-H^{x_{\pi_b}z_{\pi_b}z})y\|_2^2\), and \(c_{b4} = \|(I-H^{x_{\pi}z_{\pi_b}z})y\|_2^2\). Then,  we have \(c_{b2}^2\geq c_{b1}(c_{b3}-c_{b4})\) and 
\begin{equation}
\omega_b(\beta) = \left\{
\begin{array}{lll}
\frac{1}{2}\mathbbm{1}\{c_{b3}=c_{b4}\}+\mathbbm{1}\{c_{b3}<c_{b4}\},&\mbox{if}&c_{b1}=0,\\
\frac{1}{2}\mathbbm{1}\{\beta\in [s_b, u_b]\}+\mathbbm{1}\{\beta\in (s_b, u_b)\},&\mbox{if}&c_{b1}>0,\\
\end{array}
\right.
\end{equation}
where \(s_b=\frac{c_{b2}-\sqrt{c_{b2}^2-c_{b1} (c_{b3}-c_{b4}) }}{c_{b1}}\) and \(u_b=\frac{c_{b2}+\sqrt{c_{b2}^2-c_{b1} (c_{b3}-c_{b4}) }}{c_{b1}}\).
\end{lemma}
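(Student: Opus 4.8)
The plan is to unpack the two statistics $T_{0b}(\beta)$ and $T_{b0}(\beta)$ as functions of $\beta$ and show they are, respectively, a constant and a quadratic in $\beta$, so that the sign of their difference is governed by the roots of a quadratic. Recall from eq.~\eqref{eq:PREGstat0} that $T_{original}=\|(I-H^{x_{\pi}zz_{\pi}})y\|_2^2$ and $T_{perm}=\|(I-H^{xzz_{\pi}})y\|_2^2$, and that in $\omega_b(\beta)$ one replaces $y$ by $y-x\beta$. First I would note that $x$ lies in the column space of $(x,Z,Z_{\pi_b})$, so $(I-H^{xz_{\pi_b}z})(y-x\beta) = (I-H^{xz_{\pi_b}z})y$ is independent of $\beta$; hence $T_{0b}(\beta)\equiv c_{b4}$ is constant. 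For the other statistic, write $P_b := I-H^{x_{\pi_b}z_{\pi_b}z}$ (a projection), so that $T_{b0}(\beta) = \|P_b(y-x\beta)\|_2^2 = \|P_b y\|_2^2 - 2\beta\, x^\top P_b y + \beta^2\|P_b x\|_2^2 = c_{b3} - 2\beta c_{b2} + \beta^2 c_{b1}$, using $P_b^\top P_b = P_b$ and the definitions of $c_{b1},c_{b2},c_{b3}$.

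Next I would handle the case $c_{b1}>0$. Then $\omega_b(\beta)$ compares the constant $c_{b4}=T_{0b}(\beta)$ against the upward parabola $q(\beta):=T_{b0}(\beta)=c_{b1}\beta^2 - 2c_{b2}\beta + c_{b3}$; specifically $\omega_b(\beta)=\tfrac12\mathbbm{1}\{q(\beta)=c_{b4}\}+\mathbbm{1}\{q(\beta)>c_{b4}\}$, i.e.\ $\omega_b(\beta)>0$ exactly when $q(\beta)\geq c_{b4}$, which fails precisely on the open interval between the roots of $q(\beta)=c_{b4}$. The equation $c_{b1}\beta^2-2c_{b2}\beta+(c_{b3}-c_{b4})=0$ has discriminant $4(c_{b2}^2-c_{b1}(c_{b3}-c_{b4}))$ and, when this is nonnegative, roots $\frac{c_{b2}\pm\sqrt{c_{b2}^2-c_{b1}(c_{b3}-c_{b4})}}{c_{b1}}=s_b,u_b$ with $s_b\le u_b$; outside $[s_b,u_b]$ we have $q(\beta)>c_{b4}$ so $\omega_b=1$, at the endpoints $\omega_b=\tfrac12$, and strictly inside $\omega_b=0$ — which is exactly the claimed formula once one observes that "$\beta\in(s_b,u_b)$ contributes $+1$, $\beta\in\{s_b,u_b\}$ contributes $+\tfrac12$" matches "$\tfrac12\mathbbm{1}\{\beta\in[s_b,u_b]\}+\mathbbm{1}\{\beta\in(s_b,u_b)\}$" after subtracting from $1$; I would double-check that bookkeeping carefully since $\omega_b$ here is defined with the strict/equality directions as in the statement of Corollary~\ref{corollary:CI}, not as in Algorithm~\ref{alg:test}. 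The case $c_{b1}=0$ is immediate: then $T_{b0}(\beta)=c_{b3}$ is also constant, the comparison no longer depends on $\beta$, and $\omega_b(\beta)=\tfrac12\mathbbm{1}\{c_{b3}=c_{b4}\}+\mathbbm{1}\{c_{b3}<c_{b4}\}$.

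The only remaining assertion is $c_{b2}^2\ge c_{b1}(c_{b3}-c_{b4})$, guaranteeing the roots $s_b,u_b$ are real. I would prove this by noting it is equivalent to saying the parabola $q(\beta)$ dips to or below $c_{b4}$ for some real $\beta$, i.e.\ $\min_\beta T_{b0}(\beta)\le c_{b4}$. The minimizer of $T_{b0}(\beta)=\|P_b(y-x\beta)\|_2^2$ over $\beta\in\real$ is the least-squares fit of $y$ on $x$ within the range of $P_b$, so $\min_\beta\|P_b(y-x\beta)\|_2^2 = \|(I-H^{x_{\pi_b}z_{\pi_b}z\,x})y\|_2^2$, the residual norm after additionally projecting out $x$. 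But the column space of $(x_{\pi_b},Z_{\pi_b},Z,x)$ contains that of $(x,Z,Z_{\pi_b})$, so this residual is no larger than $\|(I-H^{xz_{\pi_b}z})y\|_2^2=c_{b4}$; hence $\min_\beta T_{b0}(\beta)\le c_{b4}$, which rearranges to the claimed inequality. The main obstacle is purely notational: keeping the several projection matrices straight (which columns each $H$ projects onto, and that $x\in\mathrm{col}(x,Z,Z_{\pi_b})$ while $x\in\mathrm{col}(x_{\pi_b},Z_{\pi_b},Z,x)$) and making sure the half-open/open-interval bookkeeping in $\omega_b(\beta)$ exactly reproduces the stated piecewise formula; the algebra itself is a one-line expansion of a squared norm.
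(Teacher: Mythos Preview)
Your overall strategy matches the paper's proof exactly: identify one statistic as constant in $\beta$, expand the other as the quadratic $c_{b1}\beta^2 - 2c_{b2}\beta + c_{b3}$, and verify the discriminant is nonnegative via the nested-projection inequality $\|(I-H^{x\,x_{\pi_b} z z_{\pi_b}})y\|_2^2 \le \|(I-H^{x z z_{\pi_b}})y\|_2^2 = c_{b4}$. That part, including the discriminant argument, is correct and is precisely how the paper argues it.

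However, you have swapped the labels $T_{0b}$ and $T_{b0}$. By Algorithm~\ref{alg:test}, $(T_{0b}, T_{b0}) = (T_{original}, T_{perm})$, so
\[
T_{0b}(\beta) = \|(I - H^{x_{\pi_b} z_{\pi_b} z})(y - x\beta)\|_2^2
\]
is the quadratic in $\beta$, while $T_{b0}(\beta) = \|(I - H^{x z z_{\pi_b}})(y - x\beta)\|_2^2 = c_{b4}$ is the constant (since $x$ lies in the column space being projected out). With the correct labeling, $\omega_b(\beta) = \mathbbm{1}\{T_{0b}(\beta) < T_{b0}(\beta)\} + \tfrac12\, \mathbbm{1}\{T_{0b}(\beta) = T_{b0}(\beta)\}$ equals $1$ exactly when the quadratic lies \emph{below} $c_{b4}$, i.e.\ for $\beta \in (s_b, u_b)$, and equals $\tfrac12$ at the endpoints --- matching the lemma directly with no ``subtracting from $1$'' needed. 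Your attempted reconciliation is a symptom of the swap, not a step in the argument; note too that your $c_{b1}=0$ conclusion, which you stated correctly as $\tfrac12\mathbbm{1}\{c_{b3}=c_{b4}\}+\mathbbm{1}\{c_{b3}<c_{b4}\}$, is already inconsistent with your own swapped labeling (under which it would read $\mathbbm{1}\{c_{b4}<c_{b3}\}$). Once you un-swap $T_{0b}\leftrightarrow T_{b0}$, every piece of your plan goes through verbatim and coincides with the paper's proof.
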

As a result, we first partition of index set of $\{0,\ldots, B\}$ into three sets  \(A_1=\{b: c_{b1}>0\}\),  \(A_2 = \{b: c_{b1}=0, c_{b3} < c_{b4}\}\), and \(A_3 = \{b: c_{b1}=0, c_{b3} = c_{b4}\}\). The value of \(\omega_b(\beta)\) remains constant as we vary \(\beta\) for \(b\in A_2\cup A_3\). Hence, the requirement of \(f(\beta)>\alpha\) is equivalent to imposing a requirement on \(f_{A_1}(\beta)\) that captures the contribution of \(\omega_b(\beta)\in A_1\), as shown below:
\[
f(\beta) > \alpha \Leftrightarrow f_{A_1}(\beta) = \sum_{b\in A_1} \omega_b(\beta) > (B+1)\alpha - 1 -|A_2|- \frac{1}{2}|A_3| \coloneqq \gamma.
\]
It can be shown that the function value of  \(f_{A_1}(\beta)\) is characterized by comparing \(\beta\) to different \(s_b\slash u_b\) values for \(b\in A_1\):
\begin{equation}
\label{eq:fA}
f_{A_1}(\beta) = \frac{1}{2}\#\{b:s_b \leq \beta\}+ \frac{1}{2}\#\{b:s_b < \beta\}-\frac{1}{2}\#\{b:u_b \leq \beta\}- \frac{1}{2}\#\{b:u_b < \beta\}.
\end{equation}
Let \(t_1 < \ldots < t_M\) denote the ordered values of $M$ unique elements in \(\cup_{b\in A_1} \{s_b, u_b\}\), and let \((m_l^s, m_l^u)\) represent the sizes of \(\#\{b:s_b=t_l\}\) and \(\#\{b:u_b=t_l\}\), respectively. As we increase $\beta$ in\(f_{A_1}(\beta)\), the function value can only changes when we first hit   \(\{t_l\}_{l=1}^M\), or when \(\beta\) slightly increases from these critical values represented. We represent the concept of increasing slightly from these critical values  by  \(\{t_l^+\}_{l=1}^M\), where \(t_l^+\) indicates being infinitesimally larger than \(t_l\).  Using these new quantities introduced, we can re-express \(f_{A_1}(t_1) = \frac{1}{2}(m_{1}^s-m_{1}^{u})\) and identify induction relations  for the function values as we increase $\beta$ to surpass the critical values $t_l$, described as follows:
\begin{equation}
\label{eq:induction_fA}
f_{A_1}(t_{l+1}) =f_{A_1}(t_l^+)+\frac{1}{2}(m_{l+1}^s-m_{l+1}^{u}),\quad f_{A_1}(t_{l}^+) =f_{A_1}(t_l)+\frac{1}{2}(m_{l}^s-m_{l}^{u}).
\end{equation}
We can utilize eq.\;(\ref{eq:induction_fA}) to efficiently calculate all \(f_A(t_l)\) and \(f_A(t^{+}_l)\) in \(O(B)\) time, given \(\{t_l\}_{l=1}^M\) and \((m_l^s, m_l^u)_{l=1}^M\). Acquisition of the $B\times 4$ matrix $(c_{bl})$, \(\{t_l\}_{l=1}^M\) and \((m_l^s, m_l^u)_{l=1}^M\) can be done in \(O(Bnp+B\log B)\) time if we record intermediate quantities from Algorithm \ref{alg:test}. Consequently, efficient comparisons between   \(f_{A}(.)\) with \(\gamma\) to determine \(\beta_{\min}\) and \(\beta_{\max}\) can be achieved.   Algorithm \ref{alg:CI} presents full details of this implementation, and provides exact construction of $CI_{\alpha}$ as stated in Theorem \ref{thm:CI}. 
\begin{algorithm}
\caption{Exact CI construction for PALMRT}\label{alg:CI}
\begin{algorithmic}[1]
\Procedure{CI}{$y, x, Z, B,\alpha$}\quad\;\;\;\Comment{Confidence interval at coverage level $1-\alpha$.}
\State Calculate the $B\times 4$ matrix  $(c_{b1}, c_{b2}, c_{b3}, c_{b4})_{b=1}^B$ as defined in Lemma \ref{lem:CI_term}.
\State Calculate $\gamma$, $\{t_{l}\}_{l=1}^M$,  $(m_l^s, m_l^u)_{l=1}^M$.
\If{$\gamma < 0$}
\State $\rm{CI}_{\alpha}=(-\infty, \infty)$.
\Else
\State Set and record $f_{A_1}(t_1) = \frac{1}{2}(m_1^s - m_1^u)$.
\For{$l=2,\ldots, M$}
\State Calculate and record $f_{A_1}(t_{l-1}^+) $ and  $f_{A_1}(t_{l}) $ as in eq.\;(\ref{eq:induction_fA}).
\EndFor
\If{$\max(f_{A_1}(.))\leq \gamma$}
\State $\rm{CI}_{\alpha}=\emptyset$.
\Else
\State $\beta_{min} = \min\{t_l: f_{A_1}(t_l)\vee f_{A_1}(t_l^+) > \gamma\}$.
 \State $\beta_{max} = \max\{t_l: f_{A_1}(t_l)\vee f_{A_1}(t_{l-1}^+) > \gamma\}$.
 \State $CI_{\alpha}=[\beta_{min}, \beta_{max}]$. 
 \EndIf
\EndIf
\EndProcedure
\end{algorithmic}
\end{algorithm}
\begin{theorem}
\label{thm:CI}
The confidence interval constructed by Algorithm \ref{alg:CI} corresponds to the $CI_{\alpha}$ defined in Lemma \ref{corollary:CI}, and guarantees a worst-case coverage of $(1-2\alpha)$ for a specified mis-coverage level $\alpha$.
\end{theorem}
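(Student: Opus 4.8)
The plan is to separate the probabilistic content from the deterministic content. The worst-case coverage bound $\min_\beta \bP[\beta \in CI_\alpha] > 1 - 2\alpha$ is exactly Corollary~\ref{corollary:CI}, so nothing about noise, design, or permutation randomness needs to be revisited. What remains is purely deterministic: for every fixed realization of $(y,x,Z)$ and of the $B$ permutations $\pi_1,\dots,\pi_B$, I must show that Algorithm~\ref{alg:CI} outputs precisely the set $CI_\alpha = [\beta_{\min},\beta_{\max}]$ with $\beta_{\min} = \inf S$, $\beta_{\max} = \sup S$, and $S = \{\beta : f(\beta) > \alpha\}$ the acceptance region of the inverted test. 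Once this identification is established, the coverage statement of the theorem is word-for-word Corollary~\ref{corollary:CI}.

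I would begin from the structure already laid out in the text. Lemma~\ref{lem:CI_term} writes each $\omega_b(\beta)$ either as a $\beta$-independent constant (when $c_{b1}=0$) or as $\tfrac12\mathbbm 1\{\beta\in[s_b,u_b]\}+\mathbbm 1\{\beta\in(s_b,u_b)\}$ (when $c_{b1}>0$), the roots $s_b\le u_b$ being real thanks to the stated inequality $c_{b2}^2\ge c_{b1}(c_{b3}-c_{b4})$. Folding the constant contributions of the index sets $A_2$ and $A_3$ into a single threshold yields the reduction $f(\beta)>\alpha\iff f_{A_1}(\beta)>\gamma$ with $\gamma=(B+1)\alpha-1-|A_2|-\tfrac12|A_3|$, where $f_{A_1}(\beta)=\sum_{b\in A_1}\omega_b(\beta)$ has the counting representation~\eqref{eq:fA}. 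Since every $\omega_b(\beta)$ lies in $\{0,\tfrac12,1\}$ we have $f_{A_1}\ge 0$ identically; hence when $\gamma<0$ the set $S$ is all of $\real$ and $CI_\alpha=(-\infty,\infty)$, which is the first branch of the algorithm.

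For the case $\gamma\ge 0$ I would first pin down the shape of $f_{A_1}$. Letting $t_1<\dots<t_M$ be the distinct values in $\cup_{b\in A_1}\{s_b,u_b\}$, formula~\eqref{eq:fA} shows $f_{A_1}$ is constant on each open interval, equal to $f_{A_1}(t_l^+)$ on $(t_l,t_{l+1})$, equal to $0$ on $(-\infty,t_1)$ and on $(t_M,\infty)$, and equal to $f_{A_1}(t_l)$ at the point $t_l$ itself. The induction relations~\eqref{eq:induction_fA} then follow by tracking the four counts in~\eqref{eq:fA}: moving from $\beta=t_l$ to $\beta=t_l^+$ increments the two strict-inequality counts by $m_l^s$ and $m_l^u$, and moving from $\beta=t_l^+$ to $\beta=t_{l+1}$ increments the two non-strict counts by $m_{l+1}^s$ and $m_{l+1}^u$; together with $f_{A_1}(t_1)=\tfrac12(m_1^s-m_1^u)$ this certifies that the values recorded in Algorithm~\ref{alg:CI} are exactly $f_{A_1}(t_l)$ and $f_{A_1}(t_l^+)$. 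Because $\gamma\ge 0$ and $f_{A_1}=0$ off $[t_1,t_M]$, the superlevel set $S$ is contained in $[t_1,t_M]$, so it is a finite union of some isolated points $\{t_l:f_{A_1}(t_l)>\gamma\}$ and some open intervals $\{(t_l,t_{l+1}):f_{A_1}(t_l^+)>\gamma\}$. If no recorded value exceeds $\gamma$ then $S=\emptyset$ and $CI_\alpha=[\inf\emptyset,\sup\emptyset]=\emptyset$; otherwise $\inf S$ is the smallest $t_l$ that either lies in $S$ or is the left endpoint of one of those open intervals, i.e.\ $\inf S=\min\{t_l:f_{A_1}(t_l)\vee f_{A_1}(t_l^+)>\gamma\}=\beta_{\min}$, and symmetrically $\sup S=\max\{t_l:f_{A_1}(t_l)\vee f_{A_1}(t_{l-1}^+)>\gamma\}=\beta_{\max}$, with the convention $f_{A_1}(t_0^+):=0\le\gamma$. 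Thus in every branch Algorithm~\ref{alg:CI} returns exactly the $CI_\alpha$ of Corollary~\ref{corollary:CI}, and the $1-2\alpha$ coverage is inherited.

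I expect the obstacle to be care rather than depth: the whole argument is a half-open/closed bookkeeping exercise, and the delicate part is making the $t_l$-versus-$t_l^+$ evaluations line up with the $\beta_{\min}$ and $\beta_{\max}$ formulas across all degenerate configurations --- empty $S$, unbounded $S$, singleton intervals, the $l=1$ boundary in the $\beta_{\max}$ rule, and coincidences among the $s_b$ and $u_b$ (including $s_b=u_b$, where the counting representation~\eqref{eq:fA} must be interpreted so that the multiplicities $m_l^s,m_l^u$ absorb the overlap; this is also why $m_1^u\le m_1^s$, keeping $f_{A_1}\ge 0$ at $t_1$). One should also confirm upfront that the $c_{b1}=0$ terms are genuinely $\beta$-free, so that they may legitimately be pushed into $\gamma$, and that $s_b,u_b$ are well defined via the Lemma's discriminant inequality, since otherwise $[s_b,u_b]$ would be meaningless.
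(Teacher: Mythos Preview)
Your proposal is correct and follows essentially the same route as the paper: both reduce the coverage claim to Corollary~\ref{corollary:CI}, then verify deterministically that Algorithm~\ref{alg:CI} computes $[\inf S,\sup S]$ by folding the $c_{b1}=0$ terms into $\gamma$, exploiting the piecewise-constant structure of $f_{A_1}$ via the counting representation~\eqref{eq:fA} and the induction~\eqref{eq:induction_fA}, and reading off $\beta_{\min},\beta_{\max}$ from the recorded values at $t_l,t_l^+$. Your final paragraph on degenerate configurations (the $t_0^+$ convention, coincidences $s_b=u_b$, and the observation that $m_1^u\le m_1^s$) is more explicit than the paper about edge cases, but the argument is the same.
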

\begin{remark}
The confidence interval $CI_{\alpha}$ can potentially be an empty set ($\emptyset$). Although this does not invalidate our assertion, an empty set offers limited information in practical contexts and might not be desired. In such situations, we may pt to construct the confidence interval using a normal approximation in the regression $y\sim x+Z$ whenever Algorithm~\ref{alg:CI} produces an empty set as the output for $CI_{\alpha}$.
\end{remark}
Proofs of Lemma \ref{lem:CI_term} and Theorem \ref{thm:CI} are deferred to Appendix~\ref{app:proof} in the Supplement.
\section{Proof of Theorem \ref{thm:main}}
\label{sec:proof_main}
Before conducting numerical experiments comparing PALMRT and existing methods, we provide the full proof to Theorem \ref{thm:main} in this section. We define $\mathcal{S}_n$ as the permutation space of $[n]$, $\mathcal{S}_B$ as the permutation space of $(0,\ldots, B)$, $\mathcal{E}=\{\varepsilon_1,\ldots,\varepsilon_n\}$ as the unordered value set of $(\varepsilon_1,\ldots, \varepsilon_n)$ (duplicates are allowed). Proposition \ref{prop:uniform_permutation} is useful for establishing our exchangeability  statement. Proposition \ref{prop:count_rows} is a minor modification of arguments for bounding ``strange set" used in proving multi-splitting conformal prediction.
\begin{proposition}
\label{prop:uniform_permutation}
Let $\sigma$ and $(\pi_b)_{b=1}^B$ be generated independently and uniformly from $\mathcal{S}_n$. Then $\sigma^{-1}$ and $(\pi_b\circ \sigma^{-1})_{b=1}^B$ are also generated independently and uniformly from $\mathcal{S}_n$.
\end{proposition}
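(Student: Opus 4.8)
The plan is to exploit the group structure of $\mathcal{S}_n$ together with two elementary invariance facts about the uniform (counting) measure on a finite group: it is preserved by the inverse map $s\mapsto s^{-1}$, and it is preserved by right translation $\tau\mapsto \tau\circ g$ for any fixed $g$. With these in hand, the proof reduces to conditioning on $\sigma$ and checking that (i) $\sigma^{-1}$ is marginally uniform, and (ii) conditionally on $\sigma$, the tuple $(\pi_b\circ\sigma^{-1})_{b=1}^B$ is i.i.d.\ uniform with a conditional law that does \emph{not} depend on $\sigma$.

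For (i), the map $\iota: s\mapsto s^{-1}$ is a bijection of $\mathcal{S}_n$ onto itself, so it pushes the uniform measure forward to the uniform measure; hence $\sigma^{-1}$ is uniform. For (ii), by the assumed independence of $\sigma$ from $(\pi_b)_{b=1}^B$, conditionally on $\sigma=s$ the permutations $\pi_1,\ldots,\pi_B$ remain i.i.d.\ uniform. Fixing $g=s^{-1}$, the right-translation map $R_g:\tau\mapsto\tau\circ g$ is a bijection of $\mathcal{S}_n$ and therefore preserves the uniform measure; applying it coordinatewise shows $(\pi_b\circ s^{-1})_{b=1}^B$ is i.i.d.\ uniform, and this product-uniform law is the same for every $s$.

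The only point that warrants care is the final assembly: one must pass from ``the conditional law of $(\pi_b\circ\sigma^{-1})_{b=1}^B$ given $\sigma^{-1}$ is the same for all values of $\sigma^{-1}$'' to the \emph{joint} independence of $\sigma^{-1}$ together with the whole transformed tuple, not merely pairwise statements. A clean way to make this airtight, and the step I would actually write out, is a one-line direct computation of the joint pmf: for any $s,\tau_1,\ldots,\tau_B\in\mathcal{S}_n$,
\[
\bP\big[\sigma^{-1}=s^{-1},\,\pi_1\circ\sigma^{-1}=\tau_1,\ldots,\pi_B\circ\sigma^{-1}=\tau_B\big]=\bP\big[\sigma=s,\,\pi_1=\tau_1\circ s,\ldots,\pi_B=\tau_B\circ s\big]=(n!)^{-(B+1)},
\]
using independence of $\sigma,\pi_1,\ldots,\pi_B$ and that each is uniform. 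Since the right-hand side factorizes as a product of $B+1$ copies of $1/n!$, the variables $\sigma^{-1},\pi_1\circ\sigma^{-1},\ldots,\pi_B\circ\sigma^{-1}$ are mutually independent and each uniform on $\mathcal{S}_n$, which is exactly the claim. I do not anticipate any genuine obstacle here; the statement is a soft consequence of finite-group symmetry, and the displayed identity is really the whole argument.
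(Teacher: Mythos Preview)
Your proof is correct and follows essentially the same approach as the paper: both arguments note that inversion is a bijection on $\mathcal{S}_n$ and then verify joint independence and uniformity via the one-line computation of the joint pmf, obtaining $(n!)^{-(B+1)}$. The extra discussion you give about conditioning and right-translation invariance is sound but ultimately redundant once the displayed identity is written down, as you yourself note.
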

\begin{proof}
First, it is obvious that $\sigma^{-1}$ is  uniformly generated from  since the map $\sigma\mapsto \sigma^{-1}$ is a bijection between  $\mathcal{S}_n$ and itself. Hence, by definition, for any $(B+1)$ permutations $\tau_0, \tau_1, \ldots, \tau_B$ of $[n]$, we have
\begin{align}
&\bP(\sigma^{-1}=\tau_0, \pi_1\circ \sigma^{-1} = \tau_1, \ldots, \pi_B\circ \sigma^{-1} = \tau_B)\notag\\
=&\bP(\sigma^{-1}=\tau_0, \pi_1 = \tau_1\circ \tau_0, \ldots, \pi_B= \tau_B\circ \tau_0)\notag\\
=&\bP(\sigma^{-1}=\tau_0)\bP(\pi_1 = \tau_1\circ \tau_0)\ldots\bP(\pi_B = \tau_B\circ \tau_0)\notag\\
=&(1\slash n!)^{B+1}, \label{eq:prop_unifrom}
\end{align}
where the last two steps have used the fact that  $\sigma^{-1}$ and $(\pi_b)_{b=1}^B$ are independent and uniformly from $\mathcal{S}_n$. Eq~(\ref{eq:prop_unifrom}) is the definition for $\sigma^{-1}$ and $(\pi_b\circ \sigma^{-1})_{b=1}^B$ being independent and uniformly generated from $\mathcal{S}_n$.
\end{proof}
\begin{proposition}
\label{prop:count_rows}
Let $ T$ be any $(B+1)\times (B+1)$ matrix, and  $W$ the $(B+1)\times (B+1)$ comparison matrix where $W_{lk} = \mathbbm{1}\{T_{kl} >T_{lk}\}+\frac{1}{2}\mathbbm{1}\{T_{kl} = T_{lk}\}$ for $l\neq k$ and $W_{ll} = 1$. Let $W_{l.}$ be the $l^{th}$ row sum of  $W$. Then,  for all $\alpha>0$, we have 
\[
|\{l: W_{l.}\leq (B+1)\alpha\}|<2\alpha (B+1).
\]
\end{proposition}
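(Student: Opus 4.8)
\textbf{Proof proposal for Proposition \ref{prop:count_rows}.}
The plan is to exploit the near-antisymmetry of the comparison matrix $W$ and then run a double-counting argument on the principal submatrix indexed by the ``small row sum'' set. First I would record the elementary identity that for any $l\neq k$,
\[
W_{lk}+W_{kl}=\mathbbm{1}\{T_{kl}>T_{lk}\}+\mathbbm{1}\{T_{lk}>T_{kl}\}+\tfrac12\mathbbm{1}\{T_{kl}=T_{lk}\}+\tfrac12\mathbbm{1}\{T_{lk}=T_{kl}\}=1 ,
\]
since exactly one of the three mutually exclusive cases $T_{kl}>T_{lk}$, $T_{kl}<T_{lk}$, $T_{kl}=T_{lk}$ occurs. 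Together with $W_{ll}=1$, this pins down the sum of $W$ over any square sub-block.

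Next I would set $S=\{l:\,W_{l\cdot}\le(B+1)\alpha\}$ and $m=|S|$. If $m=0$ the bound $m<2\alpha(B+1)$ is immediate from $\alpha>0$, so assume $m\ge 1$. On one hand, by definition of $S$,
\[
\sum_{l\in S}W_{l\cdot}\le m(B+1)\alpha .
\]
On the other hand, dropping the nonnegative contributions of columns outside $S$ and then pairing up symmetric off-diagonal entries using the identity above,
\[
\sum_{l\in S}W_{l\cdot}=\sum_{l\in S}\sum_{k=0}^{B}W_{lk}\;\ge\;\sum_{l\in S}\sum_{k\in S}W_{lk}
=\sum_{l\in S}W_{ll}+\sum_{\substack{l,k\in S\\ l\neq k}}W_{lk}
=m+\binom{m}{2}=m+\frac{m(m-1)}{2}.
\]
Combining the two displays and dividing by $m>0$ gives $(B+1)\alpha\ge 1+\frac{m-1}{2}$, i.e.\ $m\le 2(B+1)\alpha-1<2\alpha(B+1)$, which is exactly the claim.

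The only genuinely delicate point is the direction of the pairing inequality: one must be careful to discard column indices \emph{outside} $S$ (their $W$-entries are nonnegative, so this only decreases the sum) while keeping the full diagonal term $\sum_{l\in S}W_{ll}=m$, and then apply $W_{lk}+W_{kl}=1$ only to unordered pairs inside $S$. Beyond that, the argument is just arithmetic, and the edge case $m=0$ (and implicitly $m=B+1$, which the inequality also covers) needs a one-line mention. This is the same ``strange set'' counting device used for CV+/Jackknife+ coverage, specialized to the symmetric comparison matrix here.
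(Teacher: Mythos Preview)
Your proof is correct and follows essentially the same route as the paper: both use $W_{lk}+W_{kl}=1$ for $l\neq k$ together with nonnegativity of $W$ to bound the sum over the principal submatrix indexed by the ``small row sum'' set from below by $\frac{m(m+1)}{2}$, and combine this with the upper bound $m(B+1)\alpha$ to conclude $m\le 2(B+1)\alpha-1$. The paper's version is only slightly terser (it does not separately treat $m=0$), but the argument is the same ``strange set'' counting device you identify.
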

\begin{proof}
Notice that  (1)$W_{lk}\geq 0$ for all $l$, $k$, and (2) $W_{lk}+W_{kl} = 1$ for all $k\neq l$. For any sub-square matrix of $W$ constructed from selecting the same  $m$ columns and rows (denoted the index set as $I_m$), we have
\[
\sum_{l\in I_m}\sum_{k\in I_m} W_{lk} = \frac{(m+1)m}{2}.
\] 
Set $S_{\alpha} =\{l: W_{l.}\leq (B+1)\alpha\}$. Suppose the size of $S_{\alpha}$ is $s\geq 0$. The corresponding $s$ rows in $W$ must implicate:
\begin{align*}
s\alpha(B+1)\geq \frac{s(s+1)}{2}\Rightarrow 0\leq s\leq \max(0, 2\alpha(B+1)-1)<2\alpha (B+1).
\end{align*}
This concludes our proof.
\end{proof}
\subsection{Proof to Theorem~\ref{thm:main}}
\label{app:proof_main_p}
On the one hand, under Assumption \ref{ass:exchangeble_noise} and conditional on the value sets $\mathcal{E}=\{\varepsilon_1,\ldots,\varepsilon_n\}$, generation of $(\pi_1,\ldots, \pi_B, \varepsilon_{\sigma})$ can be characterized by generating \(\sigma, \pi_1, \ldots, \pi_B, \sigma\)  independently and uniformly generated from $\mathcal{S}_n$. 

On the other hand, by Condition \ref{cond:transfer} and denoting $\pi_0$ as the identity permutation of $[n]$, we have
\begin{align}
&\left(T(\pi_0, \pi_1; \varepsilon_{\sigma}),  \ldots, T(\pi_0, \pi_B; \varepsilon_{\sigma}), T( \pi_1, \pi_0; \varepsilon_{\sigma}),  \ldots, T(\pi_B, \pi_0; \varepsilon_{\sigma})\right) \label{eq:proof_main_p1}\\
=&  \left(T(\sigma^{-1}, \pi_1\circ\sigma^{-1}; \varepsilon),  \ldots, T(\sigma^{-1}, \pi_B\circ\sigma^{-1}; \varepsilon), T(\pi_1\circ\sigma^{-1},\sigma^{-1}; \varepsilon),  \ldots, T(\pi_B\circ\sigma^{-1},\sigma^{-1}; \varepsilon)\right). \notag
\end{align}
Here, we have dropped the parameters $x$ and $Z$ in $T(.,.;x, Z, \varepsilon)$ for convenience.  

Write $\tilde\pi_0 = \sigma^{-1}$, $\tilde\pi_1 = \pi_1\circ\sigma^{-1}$, \ldots,  $\tilde\pi_B = \pi_B\circ\sigma^{-1}$, by eq.~(\ref{eq:proof_main_p1}) and Proposition \ref{prop:uniform_permutation}, the marginalized joint distribution of the B pairs of statistics conditional only on $\mathcal{E}$ can be re-expressed equivalently (in distribution) using $(\tilde \pi_b)_{b=1}^B$:
\begin{equation}
\label{eq:proof_main_p2}
(T_{0b}, T_{b0})_{b=1}^B \overset{d}{=} \left(T(\tilde \pi_0, \tilde \pi_b; \varepsilon), T(\tilde \pi_b, \tilde \pi_0; \varepsilon)\right)_{b=1}^B,
\end{equation}
where $\tilde \pi_0,\ldots, \tilde \pi_B$ are independently and uniformly generated from $\mathcal{S}_n$, and $\varepsilon$ can be viewed as fixed. Hence, setting  $(\tilde T_{0b}, \tilde T_{b0}) = \left(T(\tilde \pi_0, \tilde \pi_b; \varepsilon), T(\tilde \pi_b, \tilde \pi_0; \varepsilon)\right)$, to understand the behavior of the constructed p-value using \((T_{0b}, T_{b0})_{b=1}^B\),  we can equivalently consider the distribution of $\widetilde{\rm{p_{val}}}$:
\[
\widetilde{\rm{p_{val}}} = \frac{1+\sum_{b=1}^B\left(\mathbbm{1}\{\tilde T_{b0}>\tilde T_{0b}\}+\frac{1}{2}\mathbbm{1}\{\tilde T_{b0}=\tilde T_{0b}\}\right)}{B+1}.
\]

Now, we complete the full $\tilde T$ matrix by setting $\tilde T_{kl} = T(\tilde \pi_k, \tilde \pi_l; \varepsilon)$ for all $k, l = 0,\ldots, B$. We also set the comparison $W$ matrix of $\tilde T$ as described in Proposition \ref{prop:count_rows} and set  $S_{\alpha}=\{l: W_{l.}\leq \alpha (B+1)\}$. Then, by Proposition \ref{prop:count_rows},  the size of $S_{\alpha}$ is upper bounded and $|S_{\alpha}| < 2\alpha (B+1)$.

Note that the p-value constructed using $\tilde T$ is 
\[
\widetilde{\rm{p_{val}}}=\frac{1+\sum_{b=1}^B\left(\mathbbm{1}\{\tilde T_{b0} > \tilde T_{0b}\}+\frac{1}{2}\mathbbm{1}\{\tilde T_{b0} = \tilde T_{0b}\}\}\right)}{B+1}=\frac{W_{0.}}{B+1}.
\]
To avoid confusion,  we use the lower case $(w_b)_{b=1}^B$ to denote the observed row sums $(W_{b.})_{b=1}^B$ given the current realizations $(\tilde \pi_0,\ldots, \tilde \pi_{B})=(\tau_0, \ldots, \tau_{B})$, and $(W_{b.})_{b=1}^B$  be the random variables as we change the permutations  $(\tilde \pi_0,\ldots, \tilde \pi_{B})$. Then, conditional on the unordered set of permutations $\{\tilde\pi_0,\ldots, \tilde \pi_{B}\}=\{\tau_0,\ldots, \tau_B\}$, the observed permutations take the form $\tilde\pi_b= \tau_{\zeta_b}$, for $b=0,\ldots, B$, where $\zeta$ is a permutation of $(0,\ldots, B)$. Since $\tilde\pi_0,\ldots, \tilde \pi_B$ are i.i.d generated from $\mathcal{S}_n$, $\zeta$ is uniformly generated from the  $\mathcal{S}_B$.

The above results can be used to derive the exchangeability among $W_0,\ldots, W_B$ conditional on $\mathcal{E}$ and  $\{\tilde\pi_0,\ldots, \tilde \pi_{B}\}=\{\tau_0,\ldots, \tau_B\}$. Notice that when $\tilde\pi = \tau_{\zeta}$ is permuted according to $\zeta$, the row sum $W_{0.}=w_{\zeta_0}$ is permuted accordingly:
\begin{align*}
W_0=&1+\sum_{b=1}^B \left(\mathbbm{1}\{T(\tau_{\zeta_b}, \tau_{\zeta_0};\varepsilon)>T(\tau_{\zeta_0}, \tau_{\zeta_b};\varepsilon)\}+\frac{1}{2}\mathbbm{1}\{T(\tau_{\zeta_b}, \tau_{\zeta_0};\varepsilon)=T(\tau_{\zeta_0}, \tau_{\zeta_b};\varepsilon)\}\right)\\
=&1+\sum_{b=1}^B\left(\mathbbm{1}\{T_{\zeta_b,\zeta_0}>T_{\zeta_0,\zeta_b}\}+\frac{1}{2}\mathbbm{1}\{T_{\zeta_b,\zeta_0}=T_{\zeta_0,\zeta_b}\}\right)\\
=&1+\sum_{b\neq \zeta_0}\left(\mathbbm{1}\{T_{b,\zeta_0}>T_{\zeta_0,b}\}+\frac{1}{2}\mathbbm{1}\{T_{b,\zeta_0}=T_{\zeta_0,b}\}\right)\\
=& w_{\zeta_0.},
\end{align*}
Combining the above display with the fact that  $\zeta$ is uniform from $\mathcal{S}_B$, we conclude that $\{W_{b.}\}_{b=0}^B$ are exchangeable. Consequently, we have
\begin{align*}
\bP(W_0\in S_{\alpha})<2\alpha\Rightarrow  \bP\left(\widetilde{\rm{p_{val}}}\leq \alpha\right) < 2\alpha.
\end{align*}
\section{Numerical Experiments}
\label{sec:sim}
We evaluate the performance and type I error control of various methods through numerical experiments, setting the sample size at $n=100$, and varying the dimension of $Z$ in $\{1, 5, 15\}$.   We investigate four designs: i.i.d Gaussian, i.i.d Cauchy, balanced ANOVA where each feature takes roughly an equal number of non-overlapping 1s, and a paired design where each feature assumes a value of 1 at two entries and 0 elsewhere, with one unique and one shared one-valued entry across all features. We also consider three noise settings: Gaussian, Cauchy, and multinomial.  The paired design and multinomial noise distributions are atypical in real-world applications and are included as challenging test cases for evaluating the robustness of FLtest, which has demonstrated commendable empirical performance under more conventional design structures and noise distributions in existing literature.

We compare our proposed PALMRT against six existing methods:(1) F-test, (2) PERMtest, (3) FLtest, (4) CPT with strong pre-odering by the genetic algorithm as per \citet{lei2021assumption}, (5) RPT, and (6) Bias-corrected and Accelerated Bootstrap, previously favored over plain Bootstrap \citep{efron1987better, diciccio1988review, hall1988theoretical, diciccio1996bootstrap}.   All comparisons were conducted at the p-value cut-off $\alpha = 0.05$ in the main paper; PALMRT's type I error control is further explored for $\alpha=0.01$ and $\alpha=0.001$  in supplementary materials.

For random permutation tests (PALMRT, FLtest, PERMtest), we employed F-statistics as the test statistics and set the permutation count $B = 2000$. Bias-corrected and Accelerated Bootstrap and CPT can be  more time-intensive. We used {\it bcaboot} R package with 500 bootstraps and 20 jackknife blocks \citep{efron2020automatic}.   For CPT, we used the implementation from the authors' GitHub, followed the strong ordering approach with 10,000 optimization steps via genetic algorithms, and set the number of cyclic permutations as 19 (corresponding to $\alpha = 0.05$)  \citep{lei2021assumption}.  For RPT, we used the implementation provided by the authors and set the the number of permutation as 19 (corresponding to $\alpha = 0.05$) as recommended \citep{wen2022residual}.

We generated 2,000 independent noise instances $\varepsilon$ for each experimental setting and used them to compute empirical type I error, statistical power, and confidence intervals across various signal-to-noise ratios. Sections \ref{subsec:sim_typeI} and \ref{subsec:sim_power} compare type I error and power among F-test, PERMtest, FLtest, CPT, RPT, and PALMRT. Section \ref{subsec:sim_CI} examines CI coverage for $\beta$ and their median lengths from independent runs using normal theory, Bootstrap and Algorithm \ref{alg:CI}.

\subsection{Type I error control}
\label{subsec:sim_typeI}
\begin{figure}
\includegraphics[width = 1\textwidth]{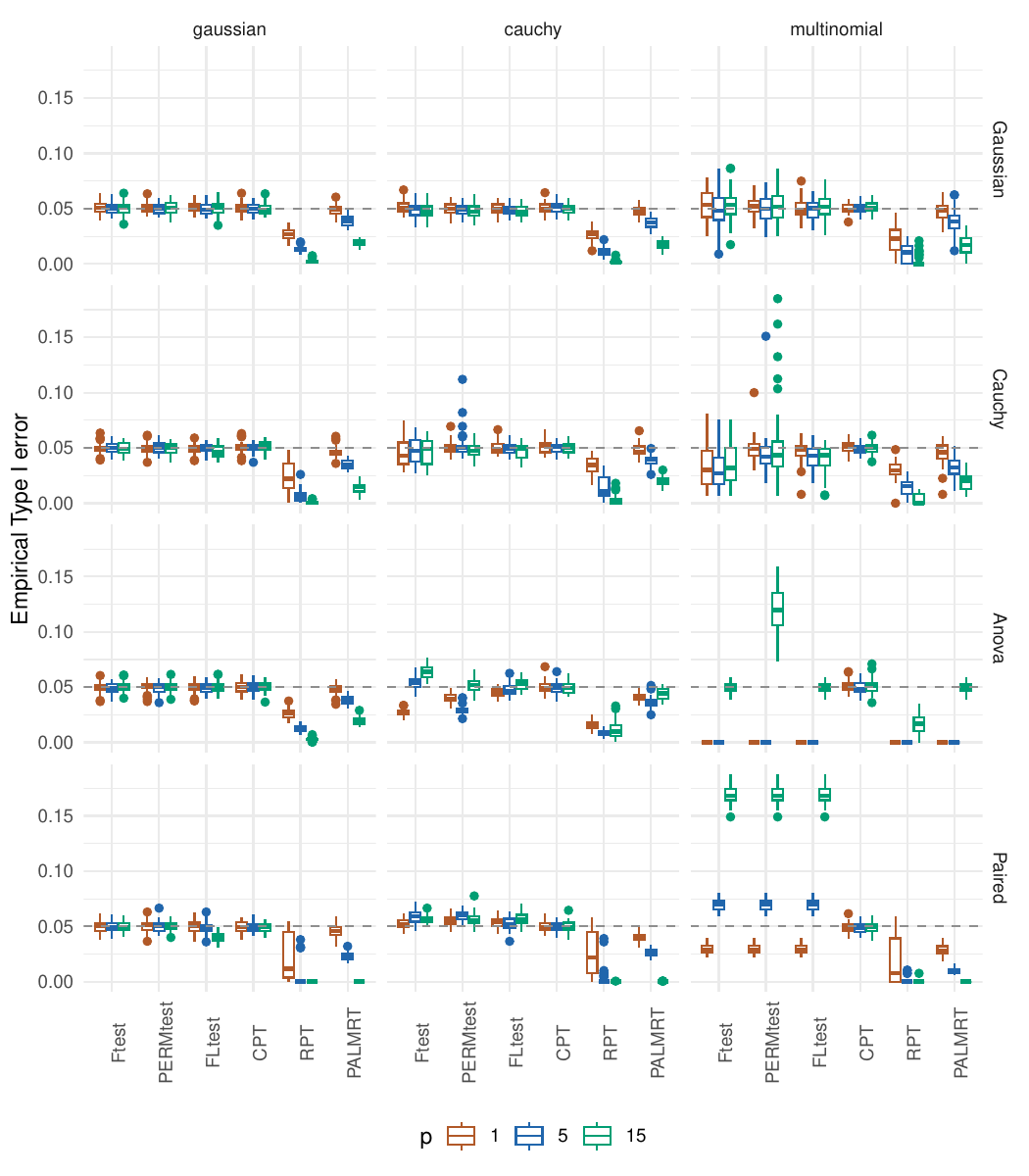}
\caption{Empirical type I error using various methods, organized into boxplots by corresponding designs (row names) and noise distributions (column names). Each boxplot displays the empirical type I error for different methods, separately for different feature dimension $p$ (color). The dashed horizontal line represents the targeted type I error  $\alpha = 0.05$.}
\label{fig:typeIA}
\end{figure}
To empirically assess type I error control, we simulate the global null distribution with $y = \varepsilon$. Figure \ref{fig:typeIA} displays type I errors from 50 independent repetitions for each setting using F-test, PERMtest, FLtest, CPT, RPT and PALMRT.

In the Gaussian noise or the Gaussian\slash Cauchy design contexts, all methods effectively control type I errors. However, for ANOVA or Paired designs, Ftest, FLtest, and PERMtest yield inflated type I errors when noise is non-Gaussian -- especially pronounced in the Paired design with multinomial errors. This underscores that not only F-test, but also FLtest and PERMtest,  lack distribution-free theoretical guarantees and can be anti-conservative in finite-sample settings.

Among the methods tested, only PALMRT, CPT and RPT guarantee worst-case coverage across all experimental settings. Although CPT's accurate type I error control is by design for $n/p \geq 19$,  its accuracy in other settings was somewhat unexpected. For example, in the Gaussian design where  $p=15$, the cyclic invariant constraints of CPT can only be satisfied by  a constant vector $\eta$ with high probability,  leading to a trivial test that always accepts $H_0$. Further examination indicates that these observations arise because cyclic-invariant constraints of CPT are not perfectly satisfied in the implemeted CPT, thereby permitting non-constant $\eta$.  PALMRT demonstrates empirical type I errors close to or below target levels, while RPT, as noted in its original paper, tends to be conservative, with the degree of conservativeness varying based on design, noise distribution, and dimensionality.  Importantly,  the conservative behavior of PALMRT does not sacrifice statistical power when compared to CPT and is generally more powerful compared to RPT, as evidenced by our subsequent power analyses.

\subsection{Power Analysis}
\label{subsec:sim_power}
\begin{figure}
\begin{center}
\includegraphics[width = .9\textwidth, height = 1.2\textwidth]{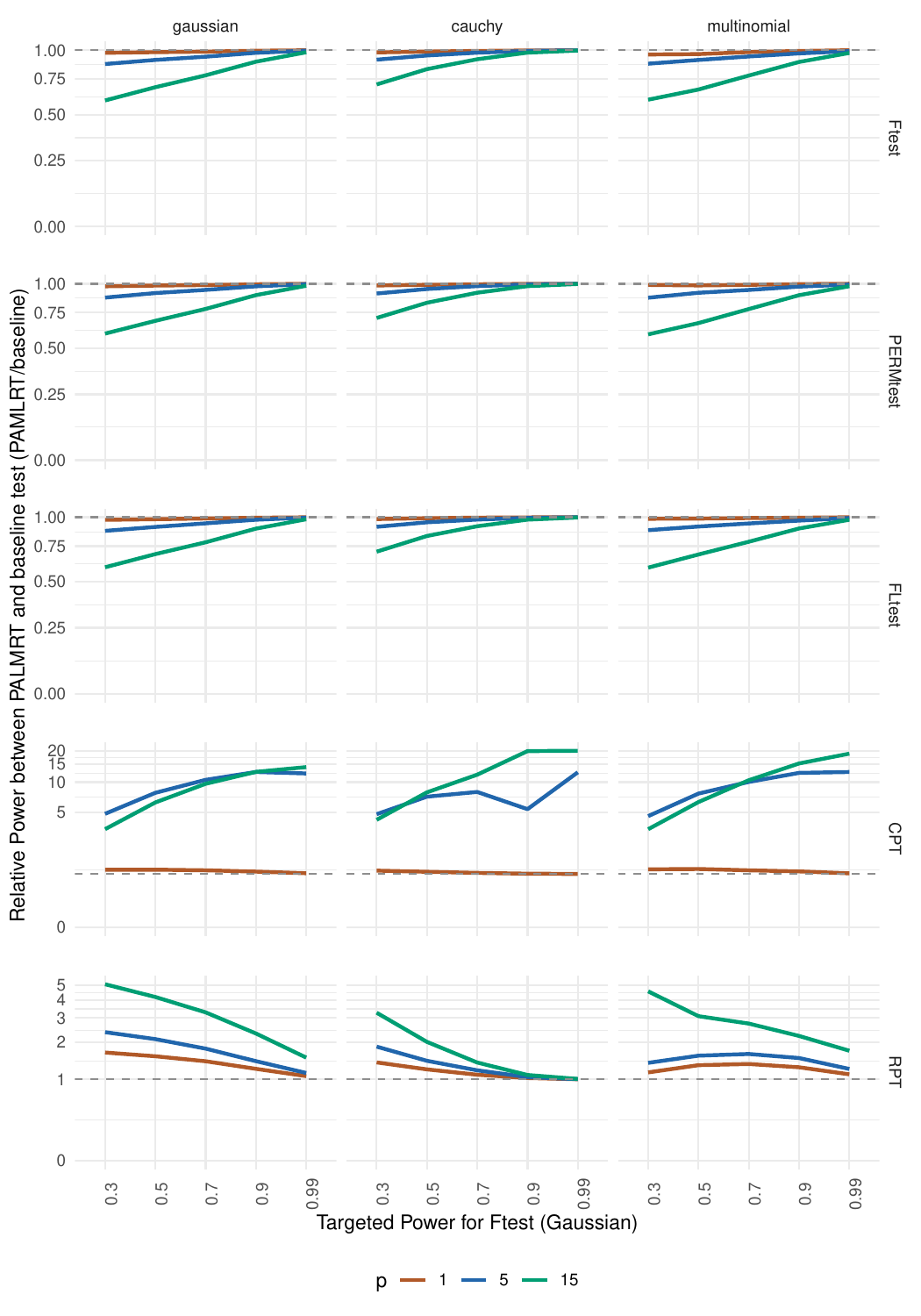}
\caption{Power analysis in Gaussian design, presented as line plots, organized by baseline methods (row names) and noise distributions (column names). Each line plot shows the median power ratio between PALMRT and a baseline method, plotted against the targeted F-test absolute power for various signal sizes. Different colors indicate varying feature dimensions $p$.}
\label{fig:powerA}
\end{center}
\end{figure}

\begin{figure}
\begin{center}
\includegraphics[width = .9\textwidth, height = 1.2\textwidth]{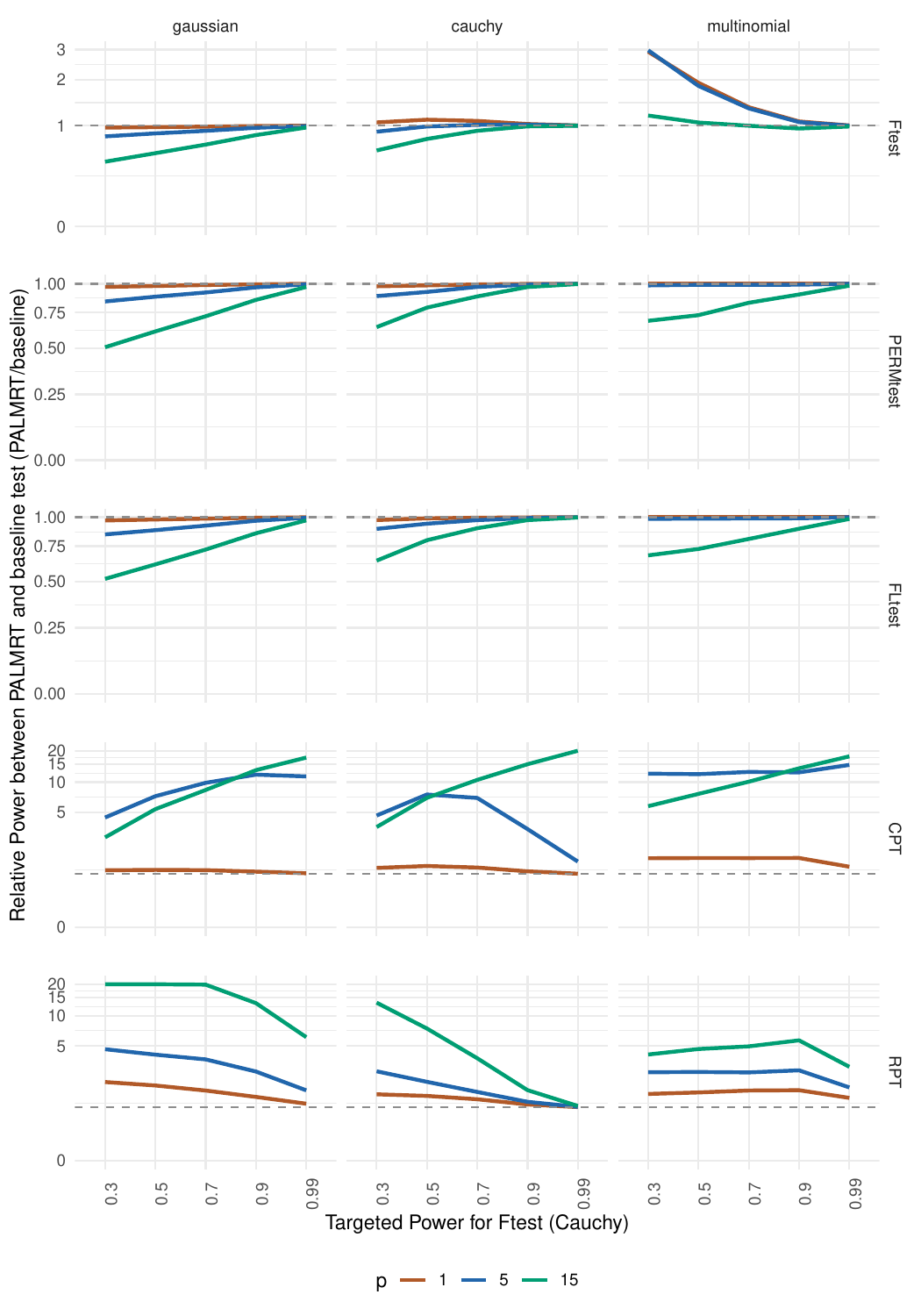}
\caption{Power analysis in Cauchy design, presented as line plots, organized by baseline methods (row names) and noise distributions (column names). Each line plot shows the median power ratio between PALMRT and a baseline method, plotted against the targeted F-test absolute power for various signal sizes. Different colors indicate varying feature dimensions $p$.}
\label{fig:powerB}
\end{center}
\end{figure}

\begin{figure}
\begin{center}
\includegraphics[width = .9\textwidth, height = 1.2\textwidth]{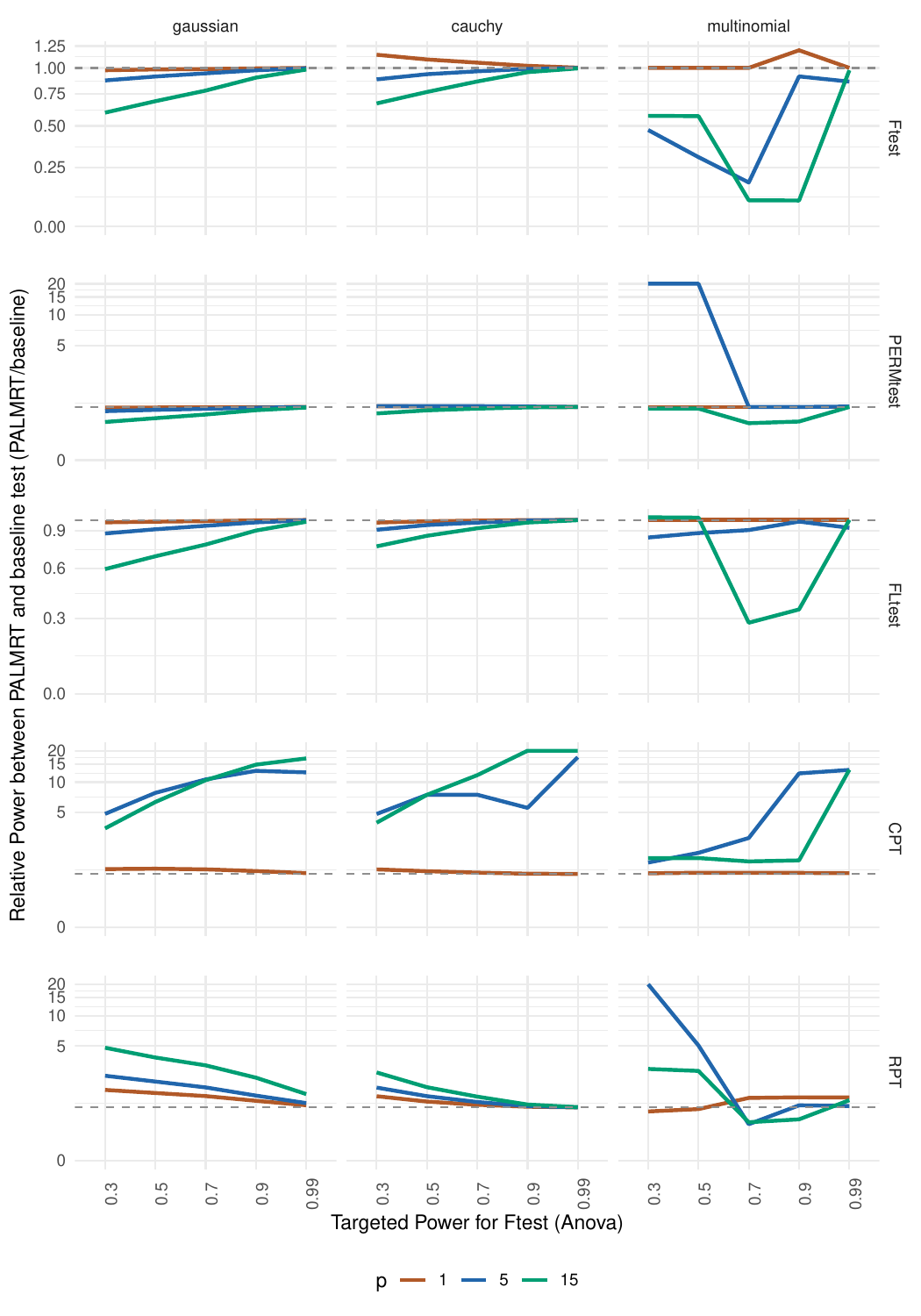}
\caption{Power analysis in Anova design, presented as line plots, organized by baseline methods (row names) and noise distributions (column names). Each line plot shows the median power ratio between PALMRT and a baseline method, plotted against the targeted F-test absolute power for various signal sizes. Different colors indicate varying feature dimensions $p$.}
\label{fig:powerC}
\end{center}
\end{figure}
In each experiment, we simulate data from the alternative setting by varying the linear coefficient in front of $x$ in eq.~(\ref{eq:sim_power}):
\begin{equation}
\label{eq:sim_power}
y = x\beta+\varepsilon.
\end{equation}
We set $\beta$ using Monte-Carlo simulation to yield F-test powers of approximately $30\%$, $50\%$, $70\%$, $90\%$, $99\%$. Figures \ref{fig:powerA}-\ref{fig:powerC} 
 display the median of relative power of PALMRT to F-test, PERMtest, FLtest, CPT and RPT for varying signal strengths, noise distributions and feature dimensions under Gaussian, Cauchy, and ANOVA designs. For visualization purposes, if a median ratio is greater than 20,  it is truncated at 20.

Under the Gaussian design, Ftest, PERMtest, and FLtest are most powerful across various noise types and feature dimensions.   PALMRT closely rivals them at $p=1$ and its relative sensitivity to low signal strength decreases with increased feature dimensions.  PALMRT achieves higher power than  CPT at $p=1$ and substantially outperforms CPT at $p=5$ and $p=15$, despite being more empirically conservative for controlling the type I error.  The gap between CPT and PALMRT does not disappear for $p=5$ and $p=15$ as we increase the signal strength. This is very different from the group bound method \citep{meinshausen2015group} which is also conservative but is powerless compared to CPT in a wide range of signal-to-noise ratios when examined in \citet{lei2021assumption}. PALMRT also shows much higher power compared to RPT with different concomitant feature dimension $p$, especially among the regime with a low signal-to-noise ratio.

In the Cauchy design, the relative sensitivities of PALMRT trend similarly to the Gaussian setting. However, F-test loses power with heavy-tailed noise distributions like Cauchy or multinomial, and random permutation tests can outperform F-test at low signal strengths. In the ANOVA design, results align with the Cauchy design when the noise is Gaussian or Cauchy. When the noise is multinomial, although the overall pattern is difficult to characterize, CPT and RPT are both significantly worse than PALMRT.

Overall, PALMRT has consistently higher power than CPT  in the Gaussian, Cauchy, and ANOVA designs, despite becoming more conservative as  $p$ increases, and is also more powerful than   RPT. When  $(n\slash p)$ is large, PALMRT and FLtest perform similarly, but their relative diverges as $(n\slash p)$ decreases. This divergence is contingent on both design and noise distributions.


\subsection{Coverage evaluation of $\rm{CI}_{\alpha}$}
\label{subsec:sim_CI}
\begin{figure}
\includegraphics[width = 1\textwidth]{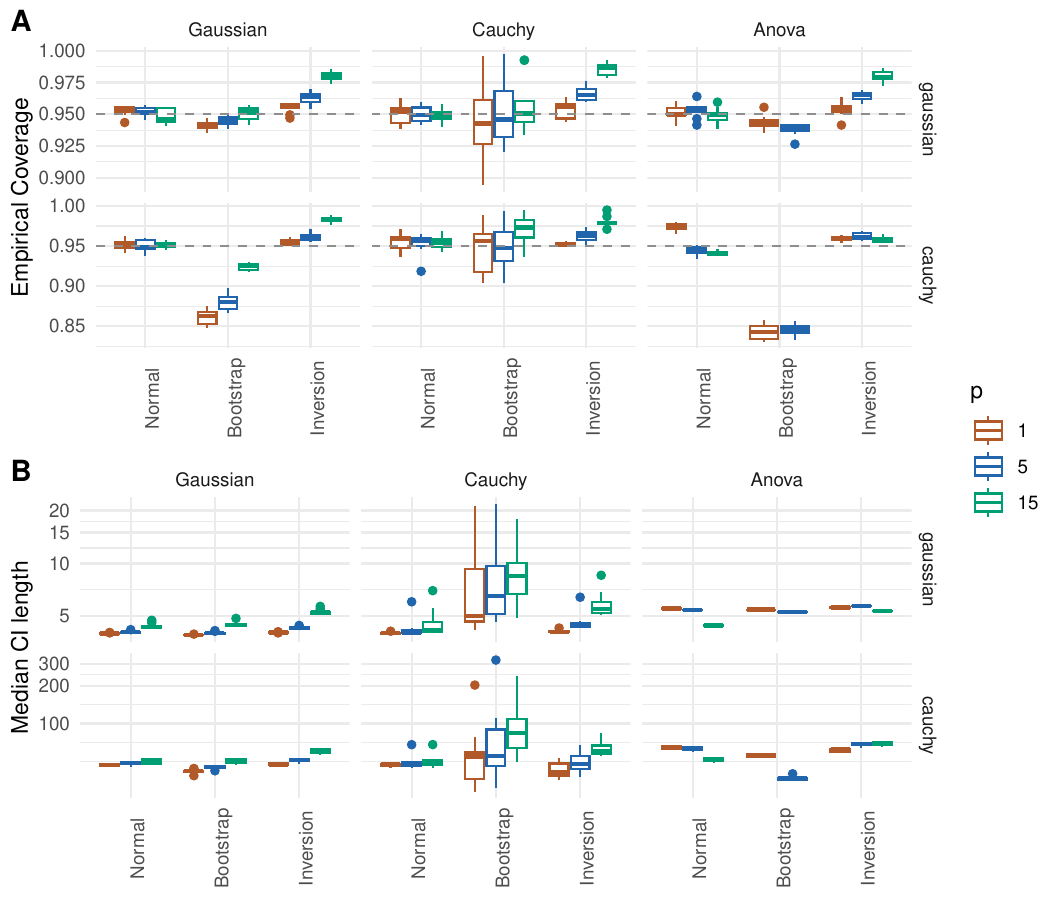}
\caption{CI coverage and length comparisons, presented as boxplots, organized by designs (row names) and noise distributions (column names).  Panel A displays the boxplots of coverage (y-axis) for different CI construction methods (x-axis). Panel B shows the boxplots of median CI length (y-axis) for different CI construction methods (x-axis). Both panels are colored by the feature dimension $p$.}
\label{fig:sim_CI}
\end{figure}
In this section, we evaluate the empirical coverage and median length of confidence intervals (\( \rm{CI}_{\alpha} \)) constructed using Algorithm \ref{alg:CI} (``Inversion"), Bootstrap, and normal approximation (``Normal") across Gaussian, Cauchy, and Anova designs for various \( \beta \). We exclude the Paired design due to undefined Bootstrap CIs for all $p$. As CI coverage and length are consistent across different \( \beta \), for the sake of space, we focus on results with a targeted F-test power of \( 50\% \) and defer full results to the supplementary materials.

Figure \ref{fig:sim_CI}A presents the achieved coverage. The CIs from Inversion consistently meet the desired coverage. In contrast, the normal CIs exhibit slight under-coverage in the ANOVA design with Cauchy noise, and Bootstrap CIs show severe under-coverage for Cauchy noise. For the ANOVA design with \( p=15 \), Bootstrap CIs are undefined and thus omitted in Figure \ref{fig:sim_CI}A. Figure \ref{fig:sim_CI}B shows the median CI lengths for each method. In line with previous findings, CIs from Algorithm \ref{alg:CI} are generally wider than Normal CIs, except for \( p=1 \) with Cauchy noise. Bootstrap CIs can exhibit greater variability when the designs are dominated by extreme values, such as in the Cauchy design setting.

\section{Robust Identification of long-Covid biomarkers}
\begin{figure}
\begin{center}
\includegraphics[width = .98\textwidth]{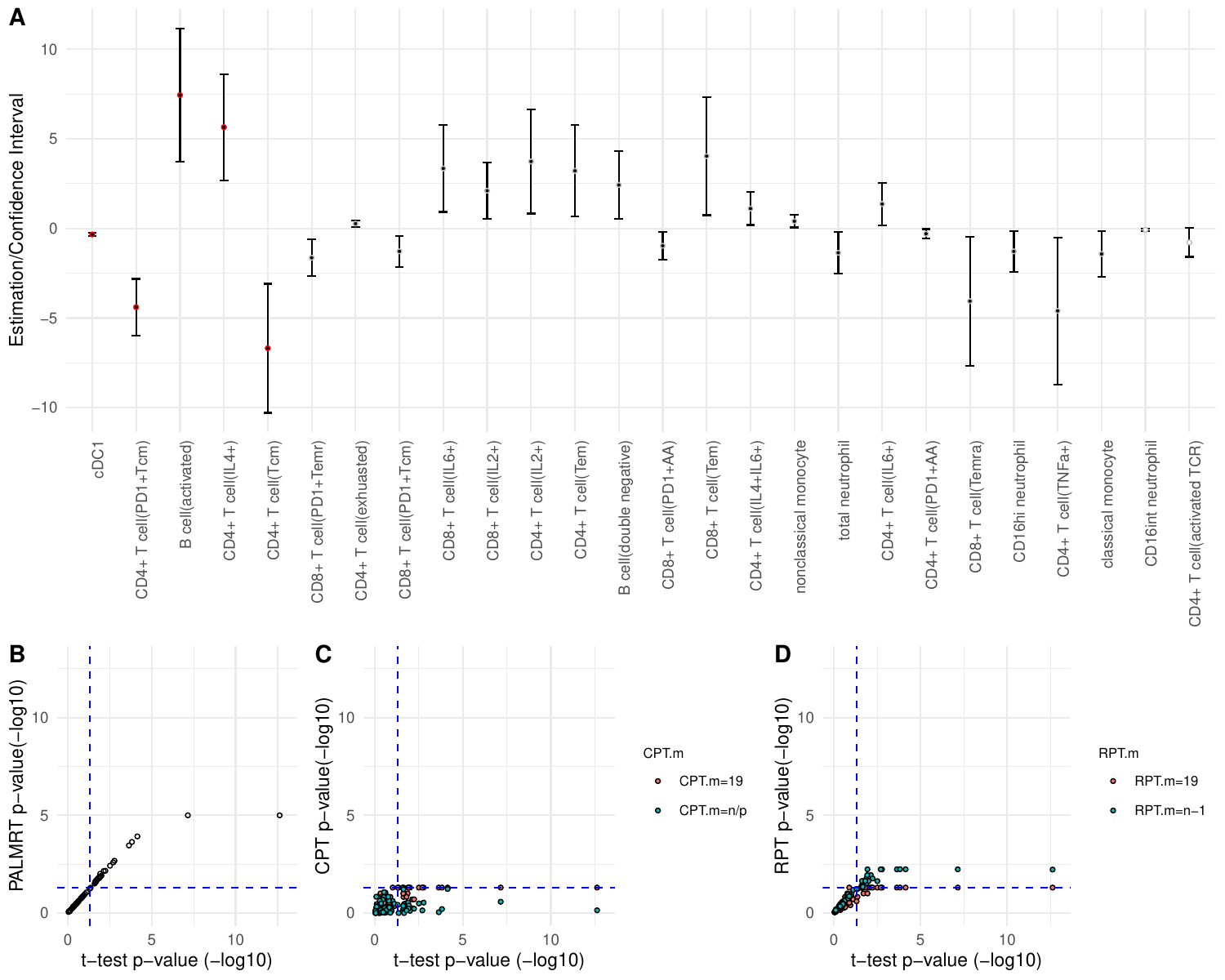}
\caption{Biomarker discovery using PALMRT and CPT. Panel A displays the estimated coefficient (center dot) and constructed CI (segmented line)for PALMRT for the 26 significant t-test findings before correction, with the x-axis label showing the feature name (ordered based on significance). The center dot is solid if p-value$\leq \alpha$ and empty otherwise using PALMRT. If a feature is significant after correction using PALMRT, the center dot is further highlighted with red circle. Panel B shows the negative log10 of the PALMRT p-value against that from the t-test for 64 features. Panel C shows the negative log10 of the CPT p-value against that from the t-test for 64 features, where a dot is colored red if CPT used $m = 19$ (CPT.m$=19$) and blue if CPT used $m = \lfloor \frac{n}{p}\rfloor-1$ (CPT.m$=n\slash p$). Panel D shows the negative log10 of the RPT p-value against that from the t-test for 64 features, where a dot is colored red if RPT used $m = 19$ (RPT.m$=19$) and blue if RPT used $m =n-1$ (RPT.m$=n-1$). The vertical\slash horizontal dashed lines represent the p-value level of $0.05$ in Panels B-D.}
\label{fig:MYLC}
\end{center}
\end{figure}
The MY-LC dataset contains measurements of 64 cell frequencies for 101 long-Covid (LC) participants and 84 controls (42 healthy samples, 42 convalescent samples without LC). A small percentage of measures are missing, with the number of observed samples ranging from 169 to 177 across features. Significant partial correlations between LC status and cell frequencies were identified by \citet{klein2022distinguishing}, after controlling for age, sex, and BMI as described in eq.\;(\ref{eq:MYLC}). Among the 64 features, 26 had \( p \leq 0.05 \), and 5 survived multiple hypothesis correction using the BH procedure. In this work, we apply PALMRT, CPT, and RPT,  which are theoretically guaranteed and validated through simulations, for robust biomarker identification. 

Figure \ref{fig:MYLC}A displays confidence intervals generated using Algorithm \ref{alg:CI} for the 26 significant biomarkers identified by the t-test, before multiple corrections, at \( \alpha = 0.05 \). The center dot in each CI represents the estimated LC coefficient in eq.\;(\ref{eq:MYLC}). Solid dots indicate features that were significant before correction; empty dots indicate otherwise.  PALMRT confirmed 24 of these 26 biomarkers. In particular, all 5 top biomarkers, which were significant after correction using the t-test, remained significant after correction using PALMRT (solid dot with red circle). In general, the p-values from PALMRT and the t-test are highly concordant, with ratios between them for the same feature ranging from 0.6 to 1.2, except for cDC1 and the central memory CD4+ T cell with positive PD1 (CD4+ T cell (PD1+Tcm)), which achieved the smallest possible PALMRT p-values at \( B = 10^5 \) and also had smallest t-test p-values $\ll 10^{-5}$ (see Figure \ref{fig:MYLC}B).

Both CPT and RPT, in contrast, reduced discoveries before correction and yielded none afterward. Increasing the number of cyclic permutations ($m$) from $m = 19$ to $m = \lfloor \frac{n}{p} \rfloor - 1$ did not improve their power in CPT, as shown in Figure~\ref{fig:MYLC}C. The nominal p-value cutoff at 0.05 resulted in 16 out of 26 t-test discoveries being confirmed when $m = 19$ and none being significant when $m = \lfloor \frac{n}{p} \rfloor - 1$ (see Figure~\ref{fig:MYLC}C). RPT achieved higher power compared to CPT at the nominal p-value cutoff of 0.05, confirming 22 out of 26 t-test discoveries when $m=19$. However, even for RPT, no discoveries were retained after correcting for multiple hypothesis tests, whether using $m=19$ or $m = n-1$, the largest $m$ allowed by the RPT algorithm (see Figure~\ref{fig:MYLC}D).

\section{Discussions}
We introduce a novel conformal test, PALMRT, designed for hypothesis testing in linear regression. This test and its corresponding confidence intervals are efficiently computed by evaluating statistic pairs, which are formed by augmenting the original regression problem with row-permuted versions of  \((x, Z)\). PALMRT achieves little power loss compared to conventional tests like the F/t-test and FL-test, and differs from CPT and RPT which also enjoy a worst-case coverage guarantee. Unlike CPT, PALMRT eliminates the need for complex optimization to construct the test and consistently outperforms CPT in both simulated and real-data scenarios. In comparison to RPT, another recently proposed method, developed independently to address the challenges in CPT, PALMRT does not require specially designed permutations that form a group, and its performance remains stable and does not deteriorate as the number of permutations increases.

Of note, the improvement over CPT is not universal. For example, in special designs like the paired design where there are many duplicate rows in \(Z\), CPT can be more powerful. Figure \ref{fig:powerD} illustrates the comparative power of CPT and PALMRT in signal detection under the paired design. While PALMRT generally matches or exceeds the power of CPT when \(p=1, 5\), it fails to detect signals at \(\alpha = 0.05\) when \(p=15\). CPT, conversely, successfully identifies an effective pre-ordering and \(\eta\) to construct a non-trivial test. While such settings are rare in practice, this observation raises an intriguing theoretical inquiry: Can the power of PALMRT be enhanced by filtering the random permutations \(\{x_{\pi_b}, Z_{\pi_b}\}\) to avoid near co-linearity among \(x, Z, x_{\pi_b}, Z_{\pi_b}\)? We earmark this question for future exploration.
\begin{figure}
\begin{center}
\includegraphics[width = 1\textwidth]{DIscussion_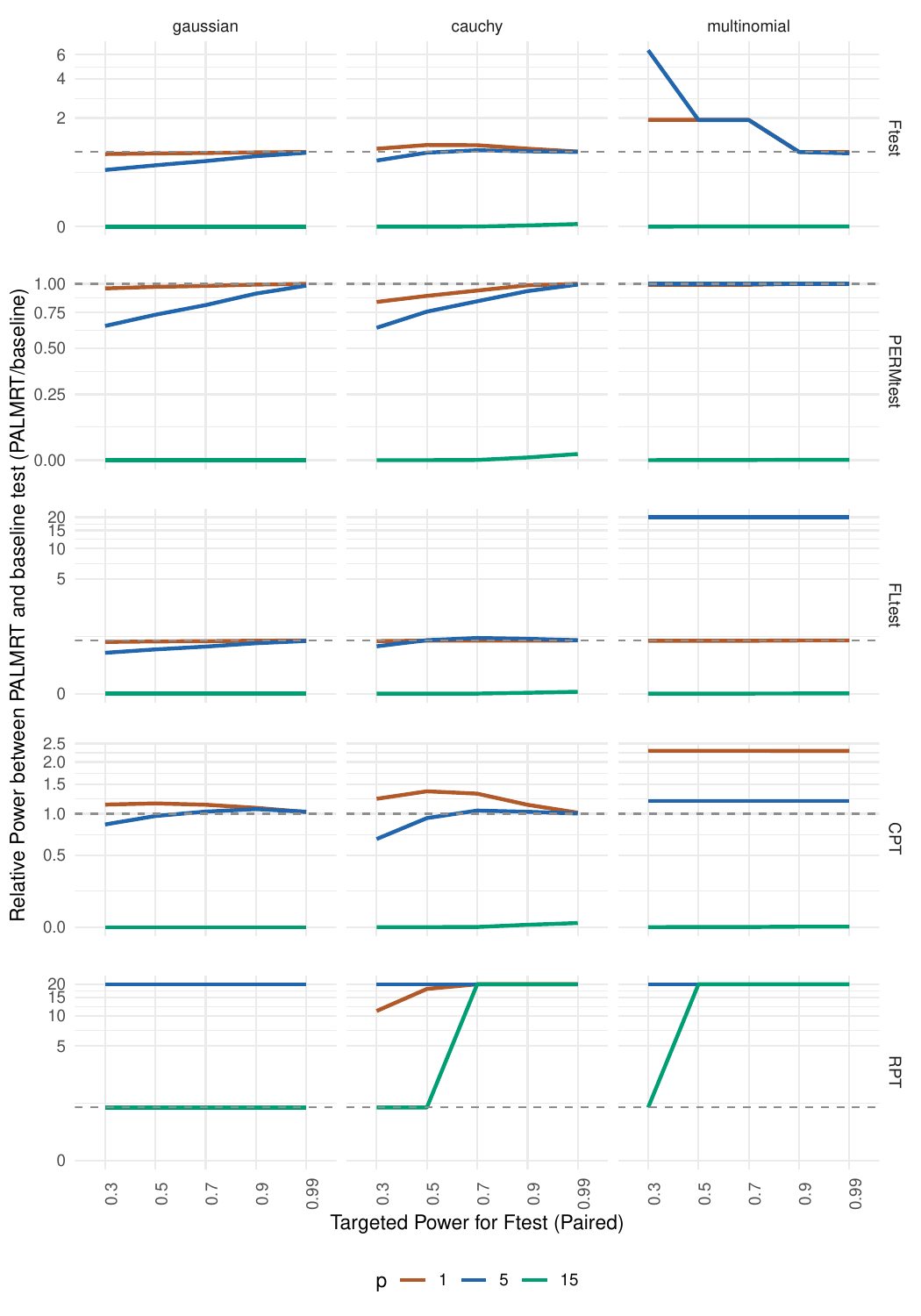}
\caption{Power analysis in Anova design, presented as line plots, organized by  methods (row names) and noise distributions (column names). Each line plot shows the median power of a given method and dimension $p$, plotted against the targeted F-test power for various signal sizes. Different colors indicate varying feature dimensions $p$.}
\label{fig:powerD}
\end{center}
\end{figure}

As a brief detour from our main discussion, the augmentation step in PALMRT is reminiscent of the seminal work by \citet{barber2015controlling}, which introduced the concept of knockoffs. This approach generates a knockoff copy \(\tilde{X}\) of the original feature matrix \(X\), e.g., \(X \leftarrow (x, Z)\) in our notation with dimension \(n \times (p+1)\). The key requirement of the knockoff copy is that swapping any pair \((\tilde{X}_j, X_j)\) leaves the covariance structure unchanged. Consequently, important quantities in regression analysis---such as OLS or Lasso coefficients \((\hat{\beta}_1, \ldots, \hat{\beta}_{p+1}, \tilde{\beta}_1, \ldots, \tilde{\beta}_{p+1})\)---are invariant under these swaps, provided the error terms \(\varepsilon_1, \ldots, \varepsilon_n\) are i.i.d Gaussian. Knockoffs control the FDR by retaining features \(X_j\) for which \(|\hat{\beta}_j| - |\tilde{\beta}_j|\) is sufficiently large. Although both PALMRT and knockoffs operate under a fixed design and have similar sample size requirements, their objectives differ. Knockoffs aim to control FDR across \(p+1\) features under the Gaussian noise assumption, often in more complex model-fitting contexts, whereas PALMRT focuses on computing individual p-values for partial correlations under a more relaxed exchangeability condition for the residual errors. Recent advancements in derandomized knockoffs \citep{ren2022derandomized} allow for the computation of modified e-values \citep{vovk2021values,wang2022false} through repeated runs with different \(\tilde{X}\) copies. However, achieving small p-values or high e-values necessitates a large \(n > \mbox{e-value}\) or \(n > (1\slash \mbox{p-value})\) at least, a constraint not shared by PALMRT. This makes PALMRT particularly advantageous for exploratory analyses aimed at uncovering partial correlations among a potentially large set of response and primary feature pairs, after adjusting for a limited number of covariates.  

\section*{Acknowledgment}
The author would like to thank Dr. Akiko Iwasaki and her team for providing access to the MY-LC data.
\section*{Funding}
 L.G. was supported in part by the NSF award DMS2310836. 

\bibliographystyle{chicago}
\bibliography{test}

\begin{thebibliography}{}

\bibitem[\protect\citeauthoryear{Anderson and Robinson}{Anderson and
  Robinson}{2001}]{anderson2001permutation}
Anderson, M.~J. and J.~Robinson (2001).
\newblock Permutation tests for linear models.
\newblock {\em Australian \& New Zealand Journal of Statistics\/}~{\em
  43\/}(1), 75--88.

\bibitem[\protect\citeauthoryear{Barber and Cand{\`e}s}{Barber and
  Cand{\`e}s}{2015}]{barber2015controlling}
Barber, R.~F. and E.~J. Cand{\`e}s (2015).
\newblock Controlling the false discovery rate via knockoffs.
\newblock {\em The Annals of Statistics\/}~{\em 43\/}(5).

\bibitem[\protect\citeauthoryear{Barber, Candes, Ramdas, and Tibshirani}{Barber
  et~al.}{2021}]{barber2021predictive}
Barber, R.~F., E.~J. Candes, A.~Ramdas, and R.~J. Tibshirani (2021).
\newblock Predictive inference with the jackknife+.

\bibitem[\protect\citeauthoryear{Davison and Hinkley}{Davison and
  Hinkley}{1997}]{davison1997bootstrap}
Davison, A.~C. and D.~V. Hinkley (1997).
\newblock {\em Bootstrap methods and their application}.
\newblock Number~1. Cambridge university press.

\bibitem[\protect\citeauthoryear{DiCiccio and Efron}{DiCiccio and
  Efron}{1996}]{diciccio1996bootstrap}
DiCiccio, T.~J. and B.~Efron (1996).
\newblock Bootstrap confidence intervals.
\newblock {\em Statistical science\/}~{\em 11\/}(3), 189--228.

\bibitem[\protect\citeauthoryear{Diciccio and Romano}{Diciccio and
  Romano}{1988}]{diciccio1988review}
Diciccio, T.~J. and J.~P. Romano (1988).
\newblock A review of bootstrap confidence intervals.
\newblock {\em Journal of the Royal Statistical Society Series B: Statistical
  Methodology\/}~{\em 50\/}(3), 338--354.

\bibitem[\protect\citeauthoryear{Draper and Stoneman}{Draper and
  Stoneman}{1966}]{draper1966testing}
Draper, N.~R. and D.~M. Stoneman (1966).
\newblock Testing for the inclusion of variables in einear regression by a
  randomisation technique.
\newblock {\em Technometrics\/}~{\em 8\/}(4), 695--699.

\bibitem[\protect\citeauthoryear{Edgington and Onghena}{Edgington and
  Onghena}{2007}]{edgington2007randomization}
Edgington, E. and P.~Onghena (2007).
\newblock {\em Randomization tests}.
\newblock CRC press.

\bibitem[\protect\citeauthoryear{Efron}{Efron}{1987}]{efron1987better}
Efron, B. (1987).
\newblock Better bootstrap confidence intervals.
\newblock {\em Journal of the American statistical Association\/}~{\em
  82\/}(397), 171--185.

\bibitem[\protect\citeauthoryear{Efron and Narasimhan}{Efron and
  Narasimhan}{2020}]{efron2020automatic}
Efron, B. and B.~Narasimhan (2020).
\newblock The automatic construction of bootstrap confidence intervals.
\newblock {\em Journal of Computational and Graphical Statistics\/}~{\em
  29\/}(3), 608--619.

\bibitem[\protect\citeauthoryear{Efron and Tibshirani}{Efron and
  Tibshirani}{1994}]{efron1994introduction}
Efron, B. and R.~J. Tibshirani (1994).
\newblock {\em An introduction to the bootstrap}.
\newblock CRC press.

\bibitem[\protect\citeauthoryear{Fisher}{Fisher}{1922}]{fisher1922goodness}
Fisher, R.~A. (1922).
\newblock The goodness of fit of regression formulae, and the distribution of
  regression coefficients.
\newblock {\em Journal of the Royal Statistical Society\/}~{\em 85\/}(4),
  597--612.

\bibitem[\protect\citeauthoryear{Fisher}{Fisher}{1970}]{fisher1970statistical}
Fisher, R.~A. (1970).
\newblock Statistical methods for research workers.
\newblock In {\em Breakthroughs in statistics: Methodology and distribution},
  pp.\  66--70. Springer.

\bibitem[\protect\citeauthoryear{Fisher et~al.}{Fisher
  et~al.}{1924}]{fisher1924036}
Fisher, R.~A. et~al. (1924).
\newblock 036: On a distribution yielding the error functions of several well
  known statistics.

\bibitem[\protect\citeauthoryear{Freedman and Lane}{Freedman and
  Lane}{1983}]{freedman1983nonstochastic}
Freedman, D. and D.~Lane (1983).
\newblock A nonstochastic interpretation of reported significance levels.
\newblock {\em Journal of Business \& Economic Statistics\/}~{\em 1\/}(4),
  292--298.

\bibitem[\protect\citeauthoryear{Garthwaite}{Garthwaite}{1996}]{garthwaite1996confidence}
Garthwaite, P.~H. (1996).
\newblock Confidence intervals from randomization tests.
\newblock {\em Biometrics\/}, 1387--1393.

\bibitem[\protect\citeauthoryear{Gupta, Kuchibhotla, and Ramdas}{Gupta
  et~al.}{2022}]{gupta2022nested}
Gupta, C., A.~K. Kuchibhotla, and A.~Ramdas (2022).
\newblock Nested conformal prediction and quantile out-of-bag ensemble methods.
\newblock {\em Pattern Recognition\/}~{\em 127}, 108496.

\bibitem[\protect\citeauthoryear{Hall}{Hall}{1988}]{hall1988theoretical}
Hall, P. (1988).
\newblock Theoretical comparison of bootstrap confidence intervals.
\newblock {\em The Annals of Statistics\/}, 927--953.

\bibitem[\protect\citeauthoryear{Hall and Wilson}{Hall and
  Wilson}{1991}]{hall1991two}
Hall, P. and S.~R. Wilson (1991).
\newblock Two guidelines for bootstrap hypothesis testing.
\newblock {\em Biometrics\/}, 757--762.

\bibitem[\protect\citeauthoryear{Han, Xu, and Guan}{Han
  et~al.}{2023}]{han2023conformalized}
Han, Y., M.~Xu, and L.~Guan (2023).
\newblock Conformalized semi-supervised random forest for classification and
  abnormality detection.
\newblock {\em arXiv preprint arXiv:2302.02237\/}.

\bibitem[\protect\citeauthoryear{Kennedy}{Kennedy}{1995}]{kennedy1995randomization}
Kennedy, F.~E. (1995).
\newblock Randomization tests in econometrics.
\newblock {\em Journal of Business \& Economic Statistics\/}~{\em 13\/}(1),
  85--94.

\bibitem[\protect\citeauthoryear{Kim, Xu, and Barber}{Kim
  et~al.}{2020}]{kim2020predictive}
Kim, B., C.~Xu, and R.~Barber (2020).
\newblock Predictive inference is free with the jackknife+-after-bootstrap.
\newblock {\em Advances in Neural Information Processing Systems\/}~{\em 33},
  4138--4149.

\bibitem[\protect\citeauthoryear{Klein, Wood, Jaycox, Dhodapkar, Lu,
  Gehlhausen, Tabachnikova, Greene, Tabacof, Malik, et~al.}{Klein
  et~al.}{2023}]{klein2022distinguishing}
Klein, J., J.~Wood, J.~Jaycox, R.~M. Dhodapkar, P.~Lu, J.~R. Gehlhausen,
  A.~Tabachnikova, K.~Greene, L.~Tabacof, A.~A. Malik, et~al. (2023).
\newblock Distinguishing features of long covid identified through immune
  profiling.
\newblock {\em Nature\/}, 1--3.

\bibitem[\protect\citeauthoryear{Lei and Bickel}{Lei and
  Bickel}{2021}]{lei2021assumption}
Lei, L. and P.~J. Bickel (2021).
\newblock An assumption-free exact test for fixed-design linear models with
  exchangeable errors.
\newblock {\em Biometrika\/}~{\em 108\/}(2), 397--412.

\bibitem[\protect\citeauthoryear{Manly}{Manly}{2006}]{manly2006randomization}
Manly, B.~F. (2006).
\newblock {\em Randomization, bootstrap and Monte Carlo methods in biology},
  Volume~70.
\newblock CRC press.

\bibitem[\protect\citeauthoryear{Meinshausen}{Meinshausen}{2015}]{meinshausen2015group}
Meinshausen, N. (2015).
\newblock Group bound: confidence intervals for groups of variables in sparse
  high dimensional regression without assumptions on the design.
\newblock {\em Journal of the Royal Statistical Society Series B: Statistical
  Methodology\/}~{\em 77\/}(5), 923--945.

\bibitem[\protect\citeauthoryear{Pitman}{Pitman}{1937}]{pitman1937significance}
Pitman, E. J.~G. (1937).
\newblock Significance tests which may be applied to samples from any
  populations. ii. the correlation coefficient test.
\newblock {\em Supplement to the Journal of the Royal Statistical
  Society\/}~{\em 4\/}(2), 225--232.

\bibitem[\protect\citeauthoryear{Ren and Barber}{Ren and
  Barber}{2022}]{ren2022derandomized}
Ren, Z. and R.~F. Barber (2022).
\newblock Derandomized knockoffs: leveraging e-values for false discovery rate
  control.
\newblock {\em arXiv preprint arXiv:2205.15461\/}.

\bibitem[\protect\citeauthoryear{Ter~Braak}{Ter~Braak}{1992}]{ter1992permutation}
Ter~Braak, C.~J. (1992).
\newblock Permutation versus bootstrap significance tests in multiple
  regression and anova.
\newblock In {\em Bootstrapping and Related Techniques: Proceedings of an
  International Conference, Held in Trier, FRG, June 4--8, 1990}, pp.\  79--85.
  Springer.

\bibitem[\protect\citeauthoryear{Vovk, Nouretdinov, Manokhin, and
  Gammerman}{Vovk et~al.}{2018}]{vovk2018cross}
Vovk, V., I.~Nouretdinov, V.~Manokhin, and A.~Gammerman (2018).
\newblock Cross-conformal predictive distributions.
\newblock In {\em conformal and probabilistic prediction and applications},
  pp.\  37--51. PMLR.

\bibitem[\protect\citeauthoryear{Vovk and Wang}{Vovk and
  Wang}{2021}]{vovk2021values}
Vovk, V. and R.~Wang (2021).
\newblock E-values: Calibration, combination and applications.
\newblock {\em The Annals of Statistics\/}~{\em 49\/}(3), 1736--1754.

\bibitem[\protect\citeauthoryear{Wang and Ramdas}{Wang and
  Ramdas}{2022}]{wang2022false}
Wang, R. and A.~Ramdas (2022).
\newblock False discovery rate control with e-values.
\newblock {\em Journal of the Royal Statistical Society Series B: Statistical
  Methodology\/}~{\em 84\/}(3), 822--852.

\bibitem[\protect\citeauthoryear{Wen, Wang, and Wang}{Wen
  et~al.}{2022}]{wen2022residual}
Wen, K., T.~Wang, and Y.~Wang (2022).
\newblock Residual permutation test for high-dimensional regression coefficient
  testing.
\newblock {\em arXiv preprint arXiv:2211.16182\/}.

\bibitem[\protect\citeauthoryear{Westfall and Young}{Westfall and
  Young}{1993}]{westfall1993resampling}
Westfall, P.~H. and S.~S. Young (1993).
\newblock {\em Resampling-based multiple testing: Examples and methods for
  p-value adjustment}, Volume 279.
\newblock John Wiley \& Sons.

\bibitem[\protect\citeauthoryear{Winkler, Ridgway, Webster, Smith, and
  Nichols}{Winkler et~al.}{2014}]{winkler2014permutation}
Winkler, A.~M., G.~R. Ridgway, M.~A. Webster, S.~M. Smith, and T.~E. Nichols
  (2014).
\newblock Permutation inference for the general linear model.
\newblock {\em Neuroimage\/}~{\em 92}, 381--397.

\end{thebibliography}


\appendix 
\section{Proofs}
\label{app:proof}
\subsection{Proof of Lemma \ref{lem:CI_term}}
\begin{proof}
Recall that $T_{0b}(\beta)=\|(I - H^{x_{\pi_b}z_{\pi_b}z})(y-x\beta)\|_2^2$ and $T_{b0}(\beta)=\|(I - H^{xzz_{\pi}})(y-x\beta)\|_2^2=\|(I - H^{xzz_{\pi}})y\|_2^2=c_{b4}$. Hence, $T_{b0}(\beta)$ does not depend on $\beta$, and we only need to examine how $\beta$ influences $T_{0b}(\beta)$. When $x$ lies in the space spanned by $(z, z_{\pi_b}, x_{\pi_b})$, we have $c_{b1} = 0$, and
\begin{align*}
T_{0b}(\beta)=&\|(I-H^{x_{\pi_b}zz_{\pi_b}})(y-x\beta)\|_2^2\\
=&\|(I-H^{x_{\pi_b}z_{\pi_b}})y\|_2^2\\
=&\|(I-H^{x_{\pi_b}z_{\pi_b}z})y\|_2^2=c_{b3}.
\end{align*}
Hence, when $c_{b3} = c_{b4}$, we have $\omega_b(\beta) =1\slash 2$ and when $c_{b3} < c_{b4}$, we have $\omega_b(\beta) =1$.

When $x$ does not lie in the space spanned by $(z, z_{\pi_b}, x_{\pi_b})$, we have, $c_{b1} > 0$, and
\begin{align*}
 T_{0b}\leq T_{b0}\Leftrightarrow&\|(I-H^{x_{\pi_b}z_{\pi_b}z})(y-x\beta)\|_2^2\leq  c_{b4},\\
\Leftrightarrow& c_{b1}\beta^2 -2c_{b2}\beta +(c_{b3}-c_{b4})\leq 0,\\
\Rightarrow& s_b\coloneqq \frac{c_{b2}-\sqrt{c_{b2}^2-c_{b1} (c_{b3}-c_{b4})}}{a_{b1}}\leq \beta \leq u_b\coloneqq \frac{c_{b2}+\sqrt{c_{b2}^2-c_{b1} (c_{b3}-c_{b4})}}{a_{b1}}.
\end{align*}
Note that we must have $c_{b2}^2\geq c_{b1}(c_{b3}-c_{b4})$, and consequently, real-valued solutions  $s_b\leq u_b$ exist.  This follows from the following argument.

Set $\beta$ as the OLS coefficient estimate for $x$ in the regression $y\sim x+Z+x_{\pi}+Z_{\pi}$. We have
\begin{align*}
T_{0b}=&\|(I-H^{x_{\pi_b}z_{\pi_b}z})(y-x\beta)\|_2^2\\
=&\|(I-H^{xzx_{\pi_b}z_{\pi_b}})y\|_2^2\\
=&\|(I-H^{R^{x_{\pi_b}|xzz_{\pi_b}}})(I-H^{xzz_{\pi_b}})y\|_2^2\\
\leq & \|(I-H^{xzz_{\pi_b}})y\|_2^2=T_{b0}.
\end{align*}
Here, we use $R^{x_{\pi_b}|z_{\pi_b}xz}$ to denote the regression residuals of $x_{\pi_b}$ in the regression $x_{\pi}\sim Z+x+Z_{\pi}$.  Hence, the previous quadratic function is guaranteed to have real-valued root(s) when $c_{b1}>0$.
\end{proof}
\subsection{Proof of Theorem \ref{thm:CI}}
Utilizing Lemma \ref{lem:CI_term}, we know that value $\beta$ satisfies $f(\beta)\coloneqq \frac{1+\sum_{b=1}^B \omega_b(\beta)}{B+1} > \alpha$ if and only if
\[
f_{A_1}(\beta)\coloneqq \sum_{b\in A_1}\omega_b(\beta) > \gamma \coloneqq (B+1)\alpha - |A_2|-\frac{1}{2}|A_3|-1,
\]
where  \(A_1=\{b: c_{b1}>0\}\), \(A_2 =  \{b: c_{b1}=0, c_{b3} < c_{b4}\}\), and \(A_3 =\{b: c_{b1}=0, c_{b3} = c_{b4}\}\); $\omega_b(\beta) = \frac{1}{2}\mathbbm{1}\{s_b\leq \beta\leq u_b\}+ \frac{1}{2}\mathbbm{1}\{s_b< \beta< u_b\}$ for $b\in A_1$.   We can rearrange $\omega_b(\beta)$ for $b\in A_1$:
\begin{equation}
\label{eq:CI_proof1}
\omega_b(\beta) = \frac{1}{2}\left(\mathbbm{1}\{\beta\geq s_b\}+\mathbbm{1}\{\beta>s_b\}-\mathbbm{1}\{\beta\geq u_b\}-\mathbbm{1}\{\beta>u_b\}\right).
\end{equation}
Summing over all terms $b\in A_1$, we obtain that
\begin{align}
f_{A_1}(\beta)&=\frac{1}{2}\left(\sum_{b\in A_1} \mathbbm{1}\{\beta\geq s_b\}+\sum_{b\in A_1} \mathbbm{1}\{\beta>s_b\}-\sum_{b\in A_1} \mathbbm{1}\{\beta\geq u_b\}-\sum_{b\in A_1} \mathbbm{1}\{\beta>u_b\}\right)\notag\\
& = \frac{1}{2}\left(\#\{b:s_b \leq \beta\}+\#\{b:s_b < \beta\}-\#\{b:u_b \leq \beta\}-\#\{b:u_b < \beta\}\right). \label{eq:CI_proof2}
\end{align}
The function  $f_{A_1}(\beta)\geq 0$ is a piece-wise constant function with change points being $t_1 < \ldots < t_M$.   Hence, when $\gamma < 0$, $f_{A}(\beta)>\gamma$ is guaranteed.

When $\gamma\geq 0$, we must have $t_1\leq \beta < t_M$ in order to have $f_A(\beta)>\gamma$ since $f(\beta) = 0$ for  $\beta < t_1$ and $\beta \geq t_M$.  It is obvious that for $\beta = t_1$,  $f_A(\beta)=\frac{1}{2}(m_1^s-m^u_1)$. Suppose we know $f_{A}(t_l)$. Then,
\begin{itemize}
\item  Increasing $\beta$ from $t_l$ to a value in $(t_l, t_{l+1})$: For all $\beta\in (t_l, t_{l+1})$, because for those $s_b$, $u_b$ smaller than $t_l$ or those no smaller than $t_{l+1}$, their contributions will not change in eq.~(\ref{eq:CI_proof2}), and only those at exactly at $t_l$ will add contribution $1\slash 2$ or $-1\slash 2$ as the corresponding comparisons to $t_l$ change from equality to inequality. Hence, we have
\[
f(\beta) = f_A(t_l)+\frac{1}{2}(m_l^s-m_l^u),\quad \forall \beta \in  (t_l, t_{l+1}).
\]
We denote this function value as $f(t_{l}^+)$ to indicate being in the open interval $(t_{l}, t_{l+1})$.
\item Increasing $\beta$ from a value in $(t_l, t_{l+1})$ to $t_{l+1}$: For those $s_b$, $u_b$  no greater than $t_{l}$ or those greater than $t_{l+1}$, their contributions will not change in eq.~(\ref{eq:CI_proof2}), and only those who now hit $t_{l+1}$ exactly will add contribution $1\slash 2$ or $-1\slash 2$ for newly satisfied comparisons to $t_{l+1}$. Hence, we have
\[
f(t_{l+1}) = f_A(t_l^+)+\frac{1}{2}(m_{l+1}^s-m_{l+1}^u).
\]
\end{itemize}
These confirms our induction rule described in the main paper. Finally,
\begin{itemize}
\item  The smallest value of $\beta$ at the critical values $\{t_l\}_{l=1}^M$ that satisfies $f_A(t_l)>\gamma$ is $\min\{t_l: f_A(t_l) > \gamma\}$. The smallest interval $(t_l, t_{l+1})$ that satisfies $f_A(\beta)>\gamma$ for $\beta\in (t_l, t_{l+1})$ is $\min\{(t_l, t_{l+1}): f_A(t_l^+) > \gamma\}$, which leads to infimum value $\min\{t_l: f_A(t_l^+) > \gamma\}$. Hence, $\beta_{\min}=\min\{t_l: f_A(t_l)\vee f_A(t_l^+) > \gamma\}$.
\item The largest value of $\beta$ at the critical values $\{t_l\}_{l=1}^M$ that satisfies $f_A(t_l)>\gamma$ is $\max\{t_l: f_A(t_l) > \gamma\}$. The largest interval $(t_l, t_{l+1})$ that satisfies $f_A(\beta)>\gamma$ for $\beta\in (t_l, t_{l+1})$ is $\max\{(t_l, t_{l+1}): f_A(t_l^+) > \gamma\}$, which leads to supremum value $\max\{t_l: f_A(t_{l-1}^+) > \gamma\}$. Hence, $\beta_{\max}=\max\{t_l: f_A(t_l)\vee f_A(t_{l-1}^+) > \gamma\}$.
\end{itemize}
This conclude our proof that Algorithm \ref{alg:CI} constructs the exact $\rm{CI}_\alpha$ described in Lemma \ref{lem:CI_term}.
\section{Additional simulation results}
\label{app:sim}
\subsection{Type I error control at $\alpha = 0.01, 0.001$}.
\begin{figure}[H]
\begin{center}
\includegraphics[width = .8\textwidth]{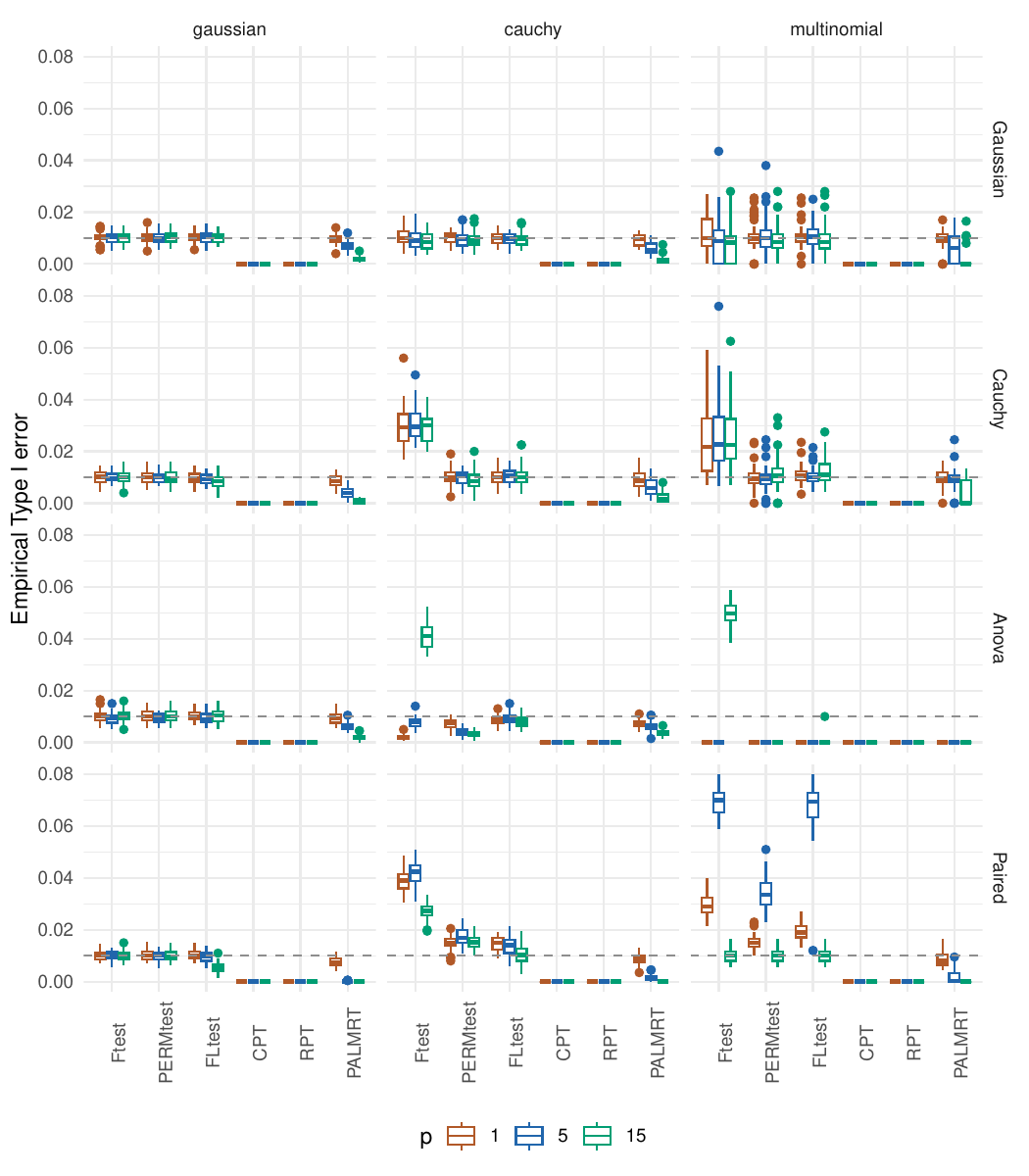}
\caption{Empirical type I error using various methods, organized into boxplots by corresponding designs (row names) and noise distributions (column names). Each boxplot displays the empirical type I error for different methods, separately for different feature dimension $p$ (color). The dashed horizontal line represents the targeted type I error  $\alpha = 0.01$.}
\label{fig:typeIB}
\end{center}
\end{figure}

\begin{figure}[H]
\begin{center}
\includegraphics[width = 1\textwidth]{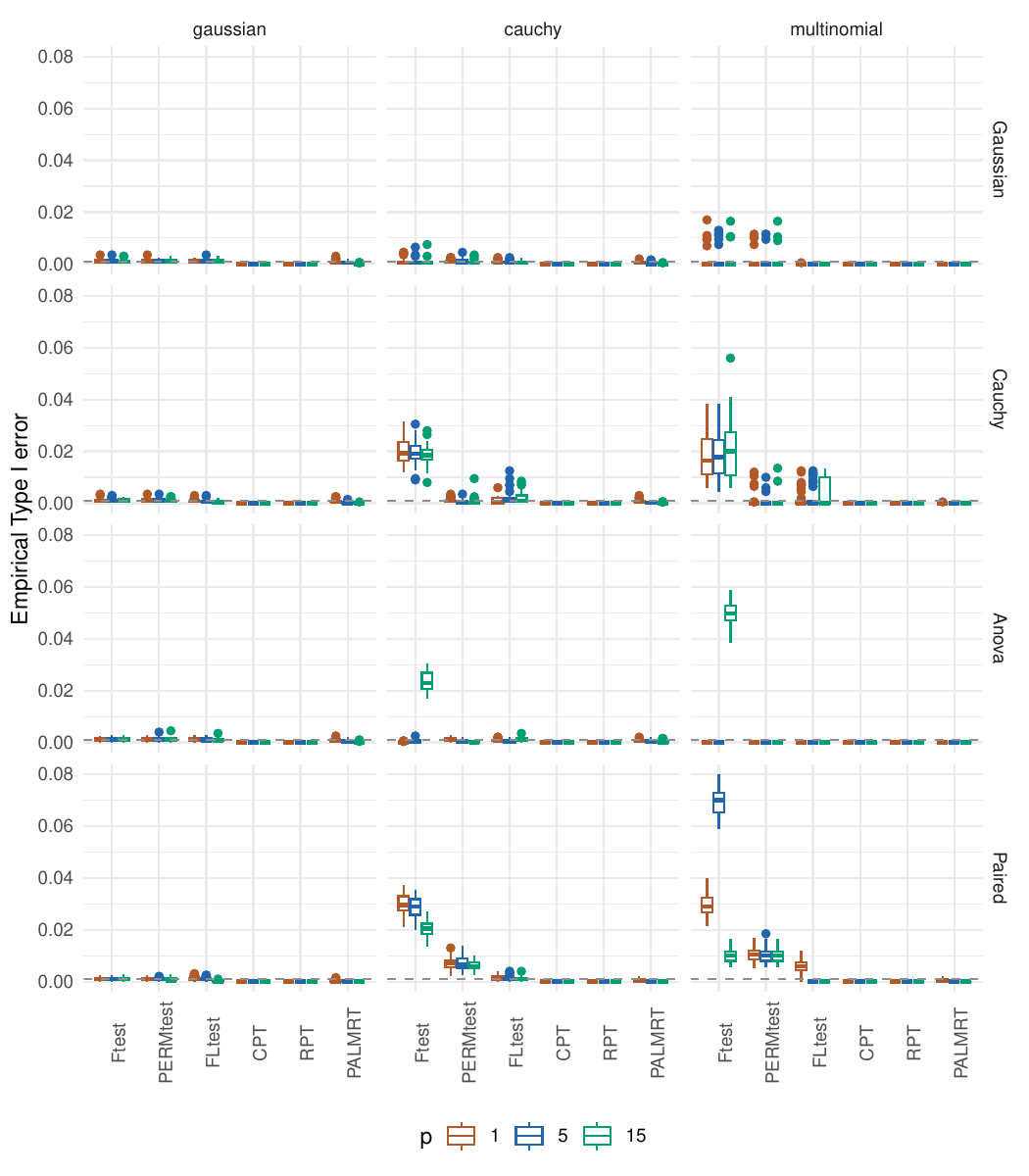}
\caption{Empirical type I error using various methods, organized into boxplots by corresponding designs (row names) and noise distributions (column names). Each boxplot displays the empirical type I error for different methods, separately for different feature dimension $p$ (color). The dashed horizontal line represents the targeted type I error  $\alpha = 0.01$.}
\end{center}
\label{fig:typeIB}
\end{figure}

\subsection{CI coverage with varying signal sizes}.
\begin{figure}[H]
\includegraphics[width = .9\textwidth]{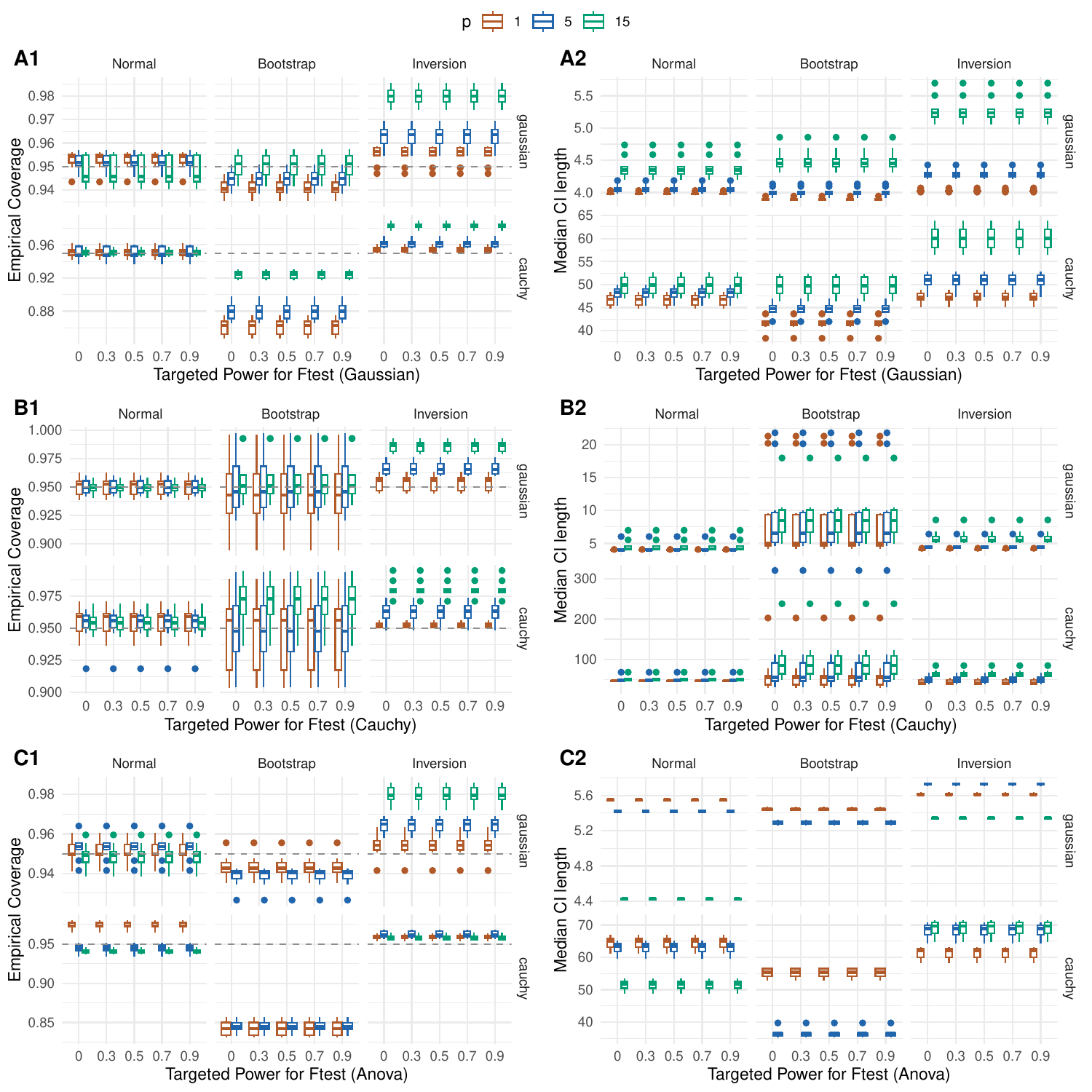}
\caption{CI coverage and length comparisons, presented as boxplots.  Panels A1-C1 display the boxplots of coverage (y-axis) for different signal strength (x-axis) for different designs (indicated in the x-lab label), different noises (row names of a panel), methods (column names of a panel), and feature dimension $p$ (color). Panels A2-C3 display the boxplots of  CI's median length (y-axis) for different signal strength (x-axis) for different designs (indicated in the x-lab label), different noises (row names of a panel), methods (column names of a panel), and feature dimension $p$ (color). }
\label{fig:sim_CI2}
\end{figure}

\end{document}